\theoremstyle{plain} 
\newtheorem{theorem}{Theorem}[section]
\newtheorem*{theorem*}{Theorem}
\newtheorem{lemma}[theorem]{Lemma}
\newtheorem*{lemma*}{Lemma}
\newtheorem{corollary}[theorem]{Corollary}
\newtheorem*{corollary*}{Corollary}
\newtheorem*{proposition*}{Proposition}
\newtheorem{definition}[theorem]{Definition}
\newtheorem*{definition*}{Definition}
\newtheorem*{conjecture*}{Conjecture}
\newtheorem*{example*}{Example}
\newtheorem{remark}[theorem]{Remark}
\newtheorem*{remark*}{Remark}
\definecolor{darkred}{rgb}{0.9,0,0.3}
\definecolor{darkblue}{rgb}{0,0.3,0.9}
\def\comment#1{\ifthenelse{\isodd{\value{page}}}{\marginpar{\raggedright\scriptsize{\textcolor{darkred}{#1}}}}{\marginpar{\raggedleft\scriptsize{\textcolor{darkred}{#1}}}}}
\newcommand{\E}{\mathbb{E}}
\newcommand{\R}{\mathbb{R}}
\newcommand{\C}{\mathbb{C}}
\newcommand{\N}{\mathbb{N}}
\newcommand{\Z}{\mathbb{Z}}
\newcommand{\HH}{\mathbb{H}}
\renewcommand{\leq}{\leqslant}
\renewcommand{\geq}{\geqslant}
\renewcommand{\epsilon}{\varepsilon}
\newcommand{\pbb}[1]{\biggl({#1}\biggr)}
\newcommand{\pa}[1]{\left({#1}\right)}
\newcommand{\avg}[1]{\langle #1 \rangle}
\newcommand{\avga}[1]{\left\langle #1 \right\rangle}
\DeclareMathOperator{\sdet}{sdet}
\newcommand{\Hhoro}{\widetilde H}
\newcommand{\Fhoro}{\widetilde F}
\newcommand{\Omegahoro}{\widetilde \Omega}
\newcommand{\psibar}{\bar{\psi}}
\newcommand{\ddp}[2]{\frac{\partial #1}{\partial #2}}
\newcommand{\bydef}{\equiv}
\begin{document}

\begin{frontmatter}

\title{Dynkin isomorphism and Mermin--Wagner theorems for
  hyperbolic sigma models and recurrence of the two-dimensional
  vertex-reinforced jump process}
\runtitle{Hyperbolic sigma models and VRJP}


\begin{aug}
\author{\fnms{Roland} \snm{Bauerschmidt}\corref{}\ead[label=e1]{rb812@cam.ac.uk}},
\author{\fnms{Tyler} \snm{Helmuth}\ead[label=e2]{th17948@bristol.ac.uk}}
\and
\author{\fnms{Andrew} \snm{Swan}\ead[label=e3]{acks2@cam.ac.uk}}

\runauthor{R.\ Bauerschmidt et al.}

\affiliation{University of Cambridge, Statistical Laboratory, DPMMS}
\and
\affiliation{University of Bristol, School of Mathematics}

\address{University of Cambridge, \\
Centre for Mathematical Sciences \\
Wilberforce Road \\
Cambridge, CB3 0WB, UK \\ 
\printead{e1,e3}}
\address{University of Bristol, \\
School of Mathematics\\
University Walk\\
Bristol, BS8 1TW, UK \\
\printead{e2}}
\end{aug}

\begin{abstract}
  We prove the vertex-reinforced jump process (VRJP) is recurrent in
  two dimensions for any translation invariant finite-range
  initial rates. Our proof has two main ingredients. The first is a
  direct connection between the VRJP and sigma models whose
  target space is a hyperbolic space $\HH^{n}$ or its supersymmetric
  counterpart $\HH^{2|2}$. These results are analogues of well-known
  relations between the Gaussian free field and the local times of
  simple random walk. The second ingredient is a Mermin--Wagner
  theorem for these sigma models. This result is of intrinsic
  interest for the sigma models and also implies our main
  theorem on the VRJP. Surprisingly, our Mermin--Wagner theorem
  applies even though the symmetry groups of $\HH^n$ and $\HH^{2|2}$
  are non-amenable.
\end{abstract}

\begin{keyword}[class=MSC]
\kwd[Primary ]{60G60}
\kwd{82B20}
\end{keyword}

\begin{keyword}
\kwd{vertex-reinforced jump process}
\kwd{hyperbolic sigma models}
\kwd{Mermin--Wagner theorem}
\kwd{Dynkin isomorphism}
\kwd{supersymmetry}
\end{keyword}

\end{frontmatter}

\section{Introduction and results}

\subsection{Introduction}
\label{sec:intro}

Our results have motivation from two different perspectives, that of
sigma models with hyperbolic symmetry and their relevance for the
Anderson transition, and that of a model of reinforced
random walks known as the \emph{vertex-reinforced jump process} (VRJP).

The VRJP was originally introduced by Werner and has attracted a great
deal of attention
recently~\cite{MR1900324,MR3420510,SabotZengPubl,MR3663098,MR3366053}.
The VRJP on a vertex set $\Lambda$ is a continuous-time random walk
that jumps from a vertex $i$ to a neighbouring vertex $j$ at time $t$
with rate 
$\beta_{ij}(1+L_{t}^{j})$, where $L_{t}^{j}$ is the local time of $j$
at time $t$ and $\beta_{ij}\geq 0$ are the initial rates.  One
should view $\Lambda$ as the vertex set of an undirected graph with edge set
$E = \{ \avg{ij} \mid \beta_{ij}>0\}$.  The dependence of
the jump rates on the local time leads the VRJP to be
attracted to itself.

One of our new results is the following theorem.
\begin{theorem} \label{thm:recurrence} Consider a vertex-reinforced
  jump process $(X_t)$ on the vertex set $\Z^{d}$ with initial rates
  $\beta$ that are finite-range and translation invariant. If $d=1,2$
  then $(X_t)$ is recurrent in the sense that the expected time
  $(X_t)$ spends at the origin is infinite.
\end{theorem}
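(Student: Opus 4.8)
The plan is to prove Theorem~\ref{thm:recurrence} by combining the two ingredients advertised in the abstract: first represent the VRJP via a hyperbolic sigma model (either $\HH^{2|2}$ or $\HH^n$), and then apply a Mermin--Wagner-type estimate to control the relevant observable on large two-dimensional boxes. Concretely, I would first recall the known identity (due to Sabot--Tarr\`es and Sabot--Zeng, and reproved here via the Dynkin-type isomorphism) that expresses the Green's function of the VRJP started at the origin on a finite graph $\Lambda$ in terms of an expectation in the $\HH^{2|2}$ sigma model with coupling constants $\beta$. The quantity ``expected time spent at the origin'' for the VRJP on $\Z^2$, run forever, is the limit (as $\Lambda \uparrow \Z^2$, with suitable boundary conditions, e.g.\ pinning at a single far-away point) of such a finite-volume Green's function evaluated at the origin; recurrence is the statement that this diverges. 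So the goal becomes: show that the sigma-model two-point function at the origin, or more precisely the relevant combination of $\HH^{2|2}$ horospherical-coordinate observables, does \emph{not} stay bounded as the box grows in $d\le 2$.

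Next I would set up the Mermin--Wagner argument in horospherical coordinates $(t_i, s_i, \psi_i, \bar\psi_i)$ on $\HH^{2|2}$ (or the bosonic analogue on $\HH^n$). The classical Mermin--Wagner heuristic is that a continuous internal symmetry cannot be spontaneously broken in $d\le 2$ because the cost of a slowly-varying ``spin wave'' is $O(\sum_{ij}\beta_{ij}(f_i - f_j)^2)$, which for a spatially spread-out profile $f$ on a box of side $N$ is only $O((\log N)^{-1})$ in $d=2$ (and $O(N^{-1})$ in $d=1$) if one uses the harmonic/logarithmic test function. The crucial and surprising point, emphasized in the abstract, is that even though the isometry group of $\HH^{2|2}$ is non-amenable (so the naive Mermin--Wagner mechanism should fail — indeed it \emph{does} fail for $\HH^n$ in the usual formulation), the particular symmetry one needs here behaves like an amenable (in fact abelian, nilpotent) one: translations in the horospherical $t$-variable act as a shift $t_i \mapsto t_i + c$, and it is invariance/quasi-invariance under \emph{spatially varying} versions $t_i \mapsto t_i + f_i$ of this shift that one exploits. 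I would therefore (i) compute how the $\HH^{2|2}$ action transforms under $t_i \mapsto t_i + f_i$, bounding the change in the action by $\tfrac12\sum_{\avg{ij}}\beta_{ij}(f_i-f_j)^2$ times a factor controlled by the $s$- and $\psi$-fields, (ii) use a convexity/Jensen or a direct change-of-variables argument to turn this into an inequality comparing the shifted and unshifted partition functions with an observable inserted, and (iii) choose $f$ to be the discrete harmonic function that is $1$ at the origin and $0$ on the boundary of the box, whose Dirichlet energy is $O(1/\log N)$ in $d=2$.

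Then I would combine the pieces: the Mermin--Wagner inequality applied to the observable $e^{t_0}$ (or $\cosh$ of the relevant coordinate, which is what the VRJP Green's function at the origin corresponds to after the Sabot--Zeng change of variables) gives a lower bound of the form $\E_\beta[\text{Green}(0,0)] \ge c\, e^{-C\,\mathcal{E}(f)}\cdot(\text{something growing})$, or alternatively an inequality showing that if the Green's function stayed bounded then the free energy would be pinned in a way incompatible with the vanishing Dirichlet energy as $N\to\infty$. In $d=1,2$ the energy $\mathcal{E}(f)\to 0$ (in $d=2$) or is summably small (in $d=1$), forcing $\E_\beta[\text{Green}(0,0)]\to\infty$, which is exactly recurrence. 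The finite-range and translation-invariance hypotheses enter to guarantee that the effective conductances $\beta_{ij}$ on $\Z^d$ give rise to a random walk whose Dirichlet energy of the equilibrium potential has the standard $\log$-divergence in $d=2$ (i.e.\ that the associated ``electrical network'' is recurrent), via a comparison with simple random walk; finite range also ensures the sum $\sum_{\avg{ij}}\beta_{ij}(f_i-f_j)^2$ localizes correctly.

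The main obstacle, and the technical heart of the argument, is step (i)--(ii): making the Mermin--Wagner bound work despite non-amenability. In the $\HH^n$ or $\HH^{2|2}$ model the action is not merely a function of coordinate differences $t_i - t_j$ — there are ``mass-like'' terms from the pinning/boundary condition and, more importantly, the coupling between the $t$-field and the $s,\psi$ fields means the cost of a spin wave $t_i \mapsto t_i + f_i$ is not simply $\tfrac12\sum \beta_{ij}(f_i-f_j)^2$ but carries extra factors like $e^{s_i + s_j}$ or $\cosh$-type weights that are \emph{a priori unbounded} under the measure. Controlling these requires a genuinely probabilistic input — e.g.\ a bound on moments of $e^{s_i}$ uniform in the volume, or a clever splitting of the shift combined with reflection-positivity / Ward-identity arguments, or exploiting the $\HH^{2|2}$ supersymmetry (the localization/Ward identities that make many observables exactly computable) to tame the fluctuations. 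I expect the proof to hinge on showing that the relevant fluctuating prefactor has a finite, volume-independent exponential moment, so that a Cauchy--Schwarz or Jensen step converts the ``bad'' $e^{s_i+s_j}$-weighted Dirichlet energy back into an honest $O(1/\log N)$ quantity; this is where the assumption that we are in the hyperbolic (as opposed to spherical) setting, together with the specific structure of the VRJP representation, is essential.
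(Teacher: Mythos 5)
Your overall architecture --- a Dynkin-type isomorphism identifying the VRJP Green's function (with killing rate $h$) with the $\HH^{2|2}$ two-point function $\avg{y_ay_b}$, followed by a Mermin--Wagner bound forcing that two-point function to diverge at the origin in $d\le 2$ --- is exactly the paper's, and you correctly anticipate both the main obstacle and its resolution: the unbounded prefactors $e^{t_i+t_j}$ appearing in the cost of a deformation must be tamed by supersymmetric Ward identities, and indeed the paper's crucial input is $\avg{e^{t_i+t_j}}_{\HH^{2|2}}=1+\avg{y_iy_j}_{\HH^{2|2}}$ and $\avg{e^{t_j}}_{\HH^{2|2}}=1$, which feed the two-point function back into the bound self-consistently. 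However, the heart of your argument --- steps (i)--(iii) and the ``combine the pieces'' paragraph --- is an outline of a different mechanism that is not carried through and, as stated, would not close. First, you deform the wrong coordinate: the symmetry the paper exploits is the abelian horospherical translation in $s$ (implemented as a Fourier-weighted derivative $D=\frac{1}{\sqrt{|\Lambda|}}\sum_j e^{-i(p\cdot j)}\partial_{s_j}$), because $\partial y_j/\partial s_j=x_j+z_j$ together with the Ward identities makes $\avg{S(p)D\Hhoro}=1$ exactly computable. A spatially varying shift $t_i\mapsto t_i+f_i$ does not generate the observable $\avg{y_0y_j}$, and the action's dependence on $t$ through $\cosh(t_i-t_j)$ is not quadratic, so your claimed bound on the change of action by $\tfrac12\sum\beta_{ij}(f_i-f_j)^2$ times a controllable factor is not available in that direction. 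Second, your concluding inequality ``$\E[\mathrm{Green}(0,0)]\ge c\,e^{-C\mathcal{E}(f)}\cdot(\text{something growing})$'' asserts the divergence rather than deriving it: nothing in your outline produces the growing factor. In the paper the divergence is extracted by a momentum-space Cauchy--Schwarz inequality $\avg{|S(p)|^2}\ge |\avg{S(p)D\Hhoro}|^2/\avg{|D\Hhoro|^2}$, giving $\hat G_{\beta,h}(p)\ge \bigl((1+G_{\beta,h}(0))\lambda(p)+h\bigr)^{-1}$, which upon summation over $p\in\Lambda^\star$ yields a self-consistent inequality for $G_{\beta,h}(0)$ that forces $G_{\beta,h}(0)\to\infty$ as $h\downarrow 0$ in $d\le 2$; your position-space version has no analogue of this closing step.

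Two further points. The finite-to-infinite volume passage is not ``take the limit of finite-volume Green's functions with pinning at a far-away point'': pinning at one site would destroy the translation invariance on which the momentum-space argument relies. The paper instead works on a torus with a uniform magnetic field $h>0$ acting as a killing rate, and identifies $\lim_{h\downarrow 0}\lim_{L\to\infty}$ of the killed torus Green's function with the $\Z^d$ Green's function via monotone convergence plus a finite-speed coupling of the torus and full-lattice VRJPs (the walk is exponentially unlikely to make order $L$ jumps before a fixed time $T$). Finally, the isomorphism used is not the Sabot--Tarr\`es/Sabot--Zeng representation of the limiting local-time field but a new, direct identity $\avg{y_ay_b}_{\HH^{2|2}}=\int_0^\infty \E^\beta_{a,0}(1_{X_t=b})e^{-ht}\,dt$ proved by an integration-by-parts/generator computation; this is a genuinely different (and here essential) ingredient, since it is what makes ``expected time at the origin'' literally equal to the sigma-model two-point function.
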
 

As the VRJP is not a Markov process, different notions of recurrence
are not \emph{a priori} equivalent. For example, another natural
notion of recurrence would be to ask if the VRJP visits the origin
infinitely often almost surely. For non-Markovian processes neither of
these definitions of recurrence implies the other: there may be
infinitely many visits to the origin with the increments of the local
time being summable. To the best of our knowledge, neither implication
is known for the VRJP.

For sufficiently small initial rates recurrence results for the VRJP
have previously been established \cite{MR3189433,MR3420510,MR2736958}.
These results are for recurrence in the sense of visiting the origin
infinitely often almost surely. See \cite{MR3189433} for a discussion
and precise statements.  It has also been shown that the linearly
edge-reinforced random walk (ERRW) with constant initial weights is
recurrent in two dimensions~\cite{MR2561431,SabotZengPubl}, but the
recurrence of the VRJP for all initial rates was an open problem until
the present work.  The relation between the ERRW and VRJP is discussed
below.

Theorem~\ref{thm:recurrence} is in fact a consequence of our proof of
a Mermin--Wagner theorem for hyperbolic sigma models and a new and very
direct relation between VRJPs and hyperbolic sigma models that
parallels the well-known relationship between simple random walks
and Gaussian free fields (the BFS--Dynkin isomorphism theorem).

Before giving precise definitions of our models and stating our
results, we briefly indicate the motivations behind hyperbolic sigma
models, and their relations with reinforced random walks. We also
explain some consequences of our results for hyperbolic sigma
models. Readers primarily interested in the VRJP may wish to skip
ahead to Section~\ref{sec:model-definitions}.

Hyperbolic sigma models were introduced as effective models to
understand the Anderson transition
\cite{MR1134935,MR2728731,MR2104878,MR3204347,MR2953867}.  In Efetov's
supersymmetric method~\cite{MR708812} the expected absolute value
squared of the resolvent of random band matrices,
i.e., $\E|(H-z)^{-1}(i,j)|^2$ where $z \in \C_+$ and $H$ is a random band
matrix, can be expressed as a correlation function of a
supersymmetric spin model. The spins of this model are invariant under
the hyperbolic symmetry $OSp(2,1|2)$.  Extended states correspond to
spontaneous breaking of this non-compact symmetry.  The supersymmetric
hyperbolic sigma model, or $\HH^{2|2}$ model, was introduced by
Zirnbauer~\cite{MR1134935} and first studied by Disertori, Spencer and
Zirnbauer \cite{MR2728731}. It is an approximation of the random band
matrix model above where radial fluctuations are neglected. This is
similar to how the $O(n)$ model is an approximation of models of
$\R^{n}$-valued spins with rotational symmetry such as
$|\varphi|^{4}$-theories.  More detailed motivation for hyperbolic
spin models is given in~\cite{MR2104878,MR2953867}.

The $\HH^{2|2}$ model is believed to capture the physics of the
Anderson transition.  As is expected for the Anderson model, it was
proved in~\cite{MR2728731} that the $OSp(2,1|2)$ symmetry of the
$\HH^{2|2}$ model is spontaneously broken in $d\geq 3$ for
sufficiently small disorder --- consistent with the existence of
extended states.  Furthermore, it was proved \cite{MR2736958} that for
sufficiently large disorder this is not the case --- consistent with
Anderson localisation.  In dimension $d\leq 2$, it is conjectured that
extended states do not exist for any disorder strength.
Equation~\eqref{e:GasympLE2} below is the corresponding statement for
the $\HH^{2|2}$ model, and we have thus completed the expected
qualitative picture for the phase diagram of the $\HH^{2|2}$ model;
see Remark~\ref{rk:localisation} for a discussion of the conjectured
optimal bounds. Equation~\eqref{e:GasympLE2} can be considered as a
version of the Mermin--Wagner theorem.  For recent and extremely
precise results in dimension one, see \cite{MR3824956}.

Based on the similarity of certain explicit formulas, it was suggested
that there is a connection between the $\HH^{2|2}$ model and linearly
edge-reinforced random walks~\cite{MR2728731}.  This connection was
first confirmed in~\cite{MR3420510} by relating marginals of the
$\HH^{2|2}$ model to the limiting local time profile of a time
  change of the VRJP.  It was also shown there that the linearly edge
reinforced walk is obtained from the VRJP when averaging over random
initial rates.  Further marginals of the $\HH^{2|2}$ model were
explored in~\cite{MR3663098}.  For a discussion of the history of the
VRJP, see \cite{MR3420510}.

Our hyperbolic analogue of the BFS--Dynkin isomorphism theorem,
Theorem~\ref{thm:VRJP-Dynkin} below, is a different relation between
the $\HH^{2|2}$ model and the VRJP than was found in \cite{MR3420510},
and it provides a more direct relation between the correlation
structures of the models.  Moreover, our statement also applies
without supersymmetry, i.e., when the spins take values in $\HH^{n}$.
We will explain further extensions of Theorem~\ref{thm:VRJP-Dynkin} in
the case of $\HH^{n}$, e.g., to multipoint correlations, in a
forthcoming publication.

\subsection{Model definitions}
\label{sec:model-definitions}

We now define the VRJP and the hyperbolic sigma models. The walk and
the sigma models are both defined in terms of a set $\Lambda$ of
vertices and non-negative edge weights $\beta=(\beta_{ij})_{i,j\in\Lambda}$,
where by edge weights we mean that $\beta_{ij}=\beta_{ji}$.
For our Mermin--Wagner theorem we will
make use of two assumptions on $\beta$. We call $\beta$ \emph{finite-range} if for each $i\in\Lambda$ we have $\beta_{ij}=0$ for all but
finitely many $j$. If $\Lambda =\Z^{d}$ we call $\beta$
\emph{translation invariant} if $\beta_{ij}=\beta_{T(i)T(j)}$ for all
translations $T$ of $\Z^{d}$.

\subsubsection{Vertex-reinforced jump process}
\label{sec:vrjp}

Let $\Lambda$ be a finite or countable set.  The VRJP is a
history-dependent continuous-time random walk $(X_t)$ on
$\Lambda$ that takes jumps from vertex $i$ to vertex $j$ with rate
$\beta_{ij}(1+L_{t}^{j})$, where
\begin{equation}
  L_t^j \bydef \int_0^t 1_{X_s=j} \, ds. 
\end{equation}
$L_{t}^{j}$ is called the \emph{local time} of the walk at vertex $j$ up
to time $t$. We will write $L_{t} \bydef (L_{t}^{i})_{i\in\Lambda}$
for the collection of local times.  It will also be useful to consider the
joint process $(X_{t},L_{t})$, which is a Markov process with
generator $\mathcal{L}$ acting on sufficiently nice functions
$g\colon \Lambda\times\R^{\Lambda}\to \R$ by
\begin{equation} \label{e:Ldef}
  \mathcal{L}^\beta g(i,\ell) = \sum_j
  \beta_{ij}(1+\ell_j) (g(j,\ell)-g(i,\ell)) + \ddp{}{\ell_i}
  g(i,\ell), \quad i\in\Lambda, \,\,\,\ell\in\R^{\Lambda}.
\end{equation}
We denote by $\E^\beta_{i,\ell}$ the expectation of the process
$(X_t,L_t)$ with initial condition $X_0=i$ and $L_0=\ell$.  The VRJP
is the marginal of $X_{t}$ in the special case $L_{0}=0$; by a slight
abuse of terminology we call $(X_{t},L_{t})$ the VRJP as well.

\subsubsection{Hyperbolic sigma models}
\label{sec:hyp}

Let $\R^{n,1}$ denote $(n+1)$-dimensional Minkowski space. Its
elements are vectors $u=(x,y^{1},\dots, y^{{n-1}},z)$, and it is
equipped with the indefinite inner product
$u\cdot u = x^2+(y^1)^2+\cdots+(y^{n-1})^2-z^2$.
Note that although $x$ plays the same role as the $y^i$, we distinguish it
in our notation for later convenience.
Recall that $n$-dimensional hyperbolic space $\HH^{n}$ can be realized as
\begin{equation} \label{e:Hn-def}
  \HH^{n} \bydef \{u\in \R^{n,1}\mid u\cdot u = -1, z>0\}.
\end{equation}

Suppose $\Lambda$ is finite and $h>0$.
To each vertex $i\in\Lambda$ we associate a
spin $u_i\in \HH^{n}$.  The energy of a spin configuration
$u=(u_i)_{i\in\Lambda} \in (\HH^{n})^\Lambda$ is
\begin{equation}
  \label{e:HH-def}
  H(u)
  = H_{\beta,h}(u)
  \bydef
  \sum_{ \avg{ij}} \beta_{ij}(-u_i\cdot u_j-1) + h \sum_j (z_j-1),
\end{equation}
where the sum is over edges $\avg{ij}$; since the summands are
symmetric in $i$ and $j$ this notation will not cause any confusion.
The $\HH^{n}$ sigma model is the measure with density
proportional to $e^{-H(u)}$ with respect to the $|\Lambda|$-fold
product of the measure $\mu$ on $\HH^{n}$ induced by the Minkowski
metric, see \eqref{e:HS-vol} and \eqref{e:Hhoro-Hn} for
explicit expressions, and we let $\avg{\cdot}_{\HH^{n}}$ denote the
expectation associated to this model:
\begin{equation}
  \label{e:HH-exp}
  \avg{F(u)}_{\HH^{n}} \bydef \frac{
    \int_{(\HH^n)^{\Lambda}}F(u) \, e^{-H(u)} \, \mu^{\otimes \Lambda}(du)}
  {\int_{(\HH^n)^{\Lambda}}e^{-H(u)} \, \mu^{\otimes \Lambda}(du)}.
\end{equation}
The energy \eqref{e:HH-def} favours spin alignment because
$u \cdot v \leq -1$ for $u,v\in \HH^{n}$ with equality if and only if $u=v$.

\subsubsection{Supersymmetric hyperbolic sigma model}
\label{sec:susy-hyp}

In this section we will introduce a probability measure which
enables the computation of a special class of
observables of the full supersymmetric $\HH^{2|2}$ model. These
restricted observables will suffice for a description of
a special, but interesting, case of our results.
Our most general results use the full supersymmetric formalism.

As will be explained further in Section~\ref{sec:susy},
at each vertex $i\in\Lambda$ there is a superspin
$u_i = (x_i,y_i,z_i,\xi_i,\eta_i) \in \HH^{2|2}$ where $\xi_i$ and
$\eta_i$ are Grassmann variables. 
For the moment all that is needed is that
the expectation of a function $F(y)$ of the
$y\bydef (y_i)_{i\in\Lambda}$ coordinates can be written as
\begin{equation}
  \label{e:H22-Real}
  \avg{F(y)}_{\HH^{2|2}} = \int_{(\R^2)^\Lambda} F(e^t s) \,
  e^{-\Hhoro(s,t)} \,  dt\,ds, 
\end{equation}
where $dt\,ds\bydef\prod_{i}dt_{i}\,ds_{i}$, $e^{t}s\bydef (e^{t_{i}}s_{i})_{i\in\Lambda}$, 
\begin{multline}
  \Hhoro(s,t)
  = \Hhoro_{\beta,h}(s,t)
  \bydef \sum_{ \avg{ij}} \beta_{ij} \pa{\cosh(t_i-t_j)-1+\frac12 (s_i-s_j)^2 e^{t_i+t_j}}
  \\ + h \sum_i \pa{\cosh(t_i)-1+\frac12 s_i^2 e^{t_i}} + \sum_i (t_i+\log(2\pi)) - \log \det D_{\beta,h}(t),
\end{multline}
and the matrix $D_{\beta,h}(t)$ on $\R^\Lambda$ is defined by the quadratic form
\begin{align}
  (v,D_{\beta,h}(t) v) \bydef \sum_{ \avg{ij}} \beta_{ij} e^{t_i+t_j} (v_i-v_j)^2 + h \sum_{i} e^{t_i} v_i^2, \quad v\in \R^\Lambda.
\end{align}
The determinant $\det D_{\beta,h}(t)$ does not depend on the $s$ variables and it is positive
since $D_{\beta,h}(t)$ is positive definite.  Thus
$e^{-\Hhoro(s,t)}dt\,ds$ is a positive measure, and 
we will show in Section~\ref{sec:susy} that it is in fact a
probability measure, i.e., $\avg{1}_{\HH^{2|2}}=1$. 

\subsection{Results}
\label{sec:results}

We now state our main results and show how
Theorem~\ref{thm:recurrence} is a consequence. 

\subsubsection{Hyperbolic BFS--Dynkin Isomorphism}
\label{sec:BFSD}

The following theorem is a hyperbolic analogue of the Dynkin isomorphism theorem,
which relates the local times of a simple random walk to the square of
a Gaussian free field.  As the Dynkin isomorphism theorem was proved
by Brydges--Fr\"ohlich--Spencer in \cite[Theorem~2.2]{MR648362}, and
later expressed in a better form by Dynkin \cite{MR693227}, we prefer
to call it the BFS--Dynkin isomorphism. The general idea of relating
Gaussian fields to simple random walks is due to
Symanzik~\cite{Syma69}. For recent discussions of these ideas
see~\cite{MR2932978, MR2815763}.  Supersymmetric versions of these
results for simple random walks go back to Luttinger and Le
Jan~\cite{MR941982,MR713539}.

Note that while we have not yet defined the meaning of
$\avg{g}_{\HH^{2|2}}$ for a general function $g$, we have given a
meaning in the case that $g$ is identically one by
\eqref{e:H22-Real}. It is this case of $g$ identically one that
  will be most relevant for the VRJP.

\begin{theorem} \label{thm:VRJP-Dynkin}
  Suppose $\Lambda$ is finite and $\beta$ is a collection of non-negative edge weights.
  Let $h>0$, let $g\colon \Lambda\times\R^{\Lambda}\to \R$ be any
  bounded smooth function, and let $a,b \in \Lambda$.
  Consider the $\HH^{n}$ model, $n\geq 2$,
  let $y = (y_i)_{i\in\Lambda}=(y^{r}_{i})_{i\in\Lambda}$ for some
  $r=1,\dots,n-1$,  and $z=(z_{i})_{i\in\Lambda}$. Then
  \begin{equation} \label{e:Eg-Dynkin-VRJP-thm}
    \sum_b  \avg{y_a y_b g(b,z-1)}_{\HH^n} 
    = 
    \avg{z_a \int_0^\infty
    \E^{\beta}_{a,z-1}(g(X_t,L_t)) \, e^{-ht} \, dt}_{\HH^{n}} .
  \end{equation}
  For the $\HH^{2|2}$ model, we have
  \begin{equation} \label{e:Eg-VRJP-thm}
    \sum_b  \avg{y_a y_b g(b,z-1)}_{\HH^{2|2}} 
    = \int_0^\infty \E^{\beta}_{a,0}(g(X_t,L_t)) \, e^{-ht} \, dt.
  \end{equation}
\end{theorem}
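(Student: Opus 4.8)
The plan is to prove \eqref{e:Eg-Dynkin-VRJP-thm} and \eqref{e:Eg-VRJP-thm} by the same integration-by-parts mechanism that underlies the classical BFS--Dynkin isomorphism: express the left-hand side as a boundary term produced by acting with a first-order differential operator coming from the hyperbolic symmetry group, and recognise the resulting integrand as the generator $\mathcal L^\beta$ of the joint process $(X_t,L_t)$ applied to a suitable function, so that a Duhamel/fundamental-theorem-of-calculus argument in $t$ closes the identity. First I would introduce \emph{horospherical coordinates} on $\HH^n$: writing a spin $u_i$ in terms of coordinates $(s_i,t_i)$ (with $z_i$ and $y_i$ explicit functions of these — this is presumably \eqref{e:HS-vol}, \eqref{e:Hhoro-Hn}), the key point is that translation in one horospherical coordinate, say $s_i\mapsto s_i+c$, is an isometry of $\HH^n$ and hence, when applied simultaneously at all vertices, leaves $\sum_{\langle ij\rangle}\beta_{ij}(-u_i\cdot u_j)$ invariant while shifting only the external field term $h\sum_j z_j$ in a controlled, $s$-linear way. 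Differentiating the normalised expectation of $y_a\, e^{-H}$ with respect to such a collective shift and using invariance of the measure $\mu^{\otimes\Lambda}$ produces, after the dust settles, an identity of the schematic form $\sum_b \langle y_a y_b\, g\rangle = \langle z_a\, (\text{something})\rangle$, where the ``something'' is exactly a resolvent $\int_0^\infty e^{-ht}(\cdots)\,dt$ of the operator whose generator matches \eqref{e:Ldef}. I would make this precise by first checking, via a direct computation with \eqref{e:Ldef}, that for fixed $a$ the function $(i,\ell)\mapsto \E^\beta_{a,\ell_0}(\cdots)$ solves the resolvent equation $(h-\mathcal L^\beta)\big(\int_0^\infty e^{-ht}\E(g(X_t,L_t))\,dt\big) = g$, so that the VRJP side of \eqref{e:Eg-Dynkin-VRJP-thm} is characterised as ``the'' solution to a linear equation; then it suffices to verify that the sigma-model side solves the same equation.

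Concretely, for the $\HH^n$ statement I would proceed as follows. Step one: fix $r$ and regard $F(u)= y_a y_b\, g(b,z-1)$; use the horospherical parametrisation to write $z_i - 1$ and $y_i^r$ in terms of $(s,t)$, noting that the combination $e^{t_i} s_i$ (up to constants) is what appears, matching the ansatz $e^t s$ in \eqref{e:H22-Real}. Step two: apply the generator identity. The operator $\mathcal L^\beta$ in \eqref{e:Ldef} has two pieces — a jump part $\sum_j \beta_{ij}(1+\ell_j)(g(j)-g(i))$ and a drift $\partial_{\ell_i}$. The jump part will emerge from the quadratic form $-u_i\cdot u_j$ in $H$: differentiating $e^{-H}$ in the right variable brings down factors $\beta_{ij}(z_i z_j - \cdots)$, and after resumming over $b$ one should see the combination $\beta_{ab}(1+(z_b-1)) = \beta_{ab} z_b$, which is precisely the VRJP jump rate with $\ell = z-1$. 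The drift $\partial_{\ell_i}$ will come from differentiating the external field $h(z_i-1)$ and from the Jacobian of the coordinate change. Step three: integrate by parts in $\HH^n$, discarding boundary terms at $z\to\infty$ using the exponential decay supplied by $h>0$ and the smoothness/boundedness of $g$; the surviving term is $\langle z_a (h - \mathcal L^\beta)[\text{RHS integrand}]\rangle$ minus $\langle z_a g\rangle$-type contributions, and one reads off \eqref{e:Eg-Dynkin-VRJP-thm}. For \eqref{e:Eg-VRJP-thm} I would then invoke supersymmetry: the $\HH^{2|2}$ model is the supersymmetric extension in which ``$\langle z_a(\cdots)\rangle_{\HH^{2|2}} = (\cdots)$'' because of localisation/horospherical identities — concretely, in the $\HH^{2|2}$ model the analogue of the factor $z_a$ integrates to $1$ and the external condition $L_0 = z-1$ becomes $L_0 = 0$, since $z_i \equiv 1$ in the relevant marginal, exactly as reflected in the difference between the two right-hand sides. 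This is where I would lean on \eqref{e:H22-Real} together with the supersymmetric machinery promised in Section~\ref{sec:susy}.

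The main obstacle, I expect, is the bookkeeping in identifying the differential operator produced by the horospherical shift with the generator $\mathcal L^\beta$ — in particular getting the $1+\ell_j$ weights to come out correctly from the $-u_i\cdot u_j$ coupling after the change of variables, and correctly tracking the Jacobian factor (the $\sum_i t_i$ and $\log\det D$ terms in $\Hhoro$) so that no spurious drift appears. A secondary but real difficulty is justifying the integration by parts rigorously: $\HH^n$ is non-compact, the measure $\mu^{\otimes\Lambda}$ has infinite total mass before the Gibbs weight is included, and one must argue that the boundary contributions at the ``ends'' of horospherical coordinates genuinely vanish — this is precisely where the positivity $h>0$ is used and where one must control growth of $g$ and of the local-time functionals. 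A possible route around the worst of the measure-theoretic technicalities is to first establish everything on a finite graph with the horospherical coordinates made fully explicit (reducing to finitely many ordinary integrals), prove the operator identity there by a bare-hands computation, and only then pass to \eqref{e:Eg-VRJP-thm} via the supersymmetric localisation lemma of Section~\ref{sec:susy}.
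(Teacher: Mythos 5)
Your overall architecture is the paper's: horospherical coordinates, integration by parts in the $s$-variables, identification of the resulting operator with the generator $\mathcal L^\beta$ of $(X_t,L_t)$, a resolvent/Duhamel argument in $t$, and supersymmetric localisation to collapse $\avg{z_a(\cdots)}_{\HH^{2|2}}$ to an evaluation at $z=1$, $L_0=0$. The paper packages this as Lemma~\ref{thm:IBP}: for $h=0$ (the case $h>0$ is reduced to $h=0$ by absorbing $e^{-h\sum_i(z_i-1)}$ into $g$, which simultaneously produces the killing factor $e^{-ht}$ because $\sum_i L^i_t=t$),
\begin{equation*}
 -\sum_b \int e^{-H_{\beta,0}}\, y_a y_b\, \mathcal L^{\beta} g(b,z-1) \;=\; \int e^{-H_{\beta,0}}\, z_a\, g(a,z-1),
\end{equation*}
which is then applied to $g_t(i,\ell)=\E_{i,\ell}(g(X_t,L_t))$ and integrated over $t$ using $\mathcal L^\beta g_t=\partial_t g_t$ and transience of $(X_t,L_t)$.

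However, your account of where the drift term $\ddp{}{\ell_i}$ of \eqref{e:Ldef} comes from is wrong, and this is the crux of the computation. It does not come from differentiating the external field or from the Jacobian of the coordinate change: the Jacobian $\prod_i e^{(n-1)t_i}$ is independent of $s$, and the lemma is proved with $h=0$. The actual mechanism is the identity $\ddp{z_i}{s_i}=y_i$ from \eqref{e:sderiv}, which gives $y_b\,\ddp{}{\ell_b}g(b,z-1)=\ddp{}{s_b}g(b,z-1)$; it is \emph{this} term that one integrates by parts. When the $s_b$-derivative hits $e^{-H}$ it produces, via $\ddp{(u_b\cdot u_c)}{s_b}=y_c(x_b+z_b)-y_b(x_c+z_c)$, exactly the jump part $\sum_c\beta_{bc}z_c\,(g(c,\cdot)-g(b,\cdot))$ with rates $\beta_{bc}(1+\ell_c)$ for $\ell=z-1$, cancelling the jump part of the generator; when it hits $y_a$ it produces the diagonal term $\delta_{ab}(x_a+z_a)g(a,z-1)$, and the Ward identity $\avg{x_a\,(\cdots)}=0$ from the $x\mapsto-x$ symmetry \eqref{e:nonSUSY-ward} kills the $x_a$ contribution, leaving precisely the factor $z_a$ on the right-hand side of \eqref{e:Eg-Dynkin-VRJP-thm} (and, after localisation \eqref{e:localisation-z}, the evaluation $g(a,0)$ for $\HH^{2|2}$) --- a step your proposal does not explain. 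A collective-shift Ward identity as you state it also cannot produce the answer directly, since the integrand $y_ay_b\,g(b,z-1)$ depends on the summation index $b$, so one needs a separate integration by parts in each $s_b$ rather than a single global translation. None of this invalidates your plan --- the resolvent characterisation and the localisation step are exactly right --- but as written the proposal is missing the one identity, $\ddp{z_i}{s_i}=y_i$, that makes the local time appear, together with the Ward identity that produces $z_a$.
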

\begin{remark}
  Theorem~\ref{thm:VRJP-Dynkin} also holds for the $\HH^{1}$ model, but as the
  proof requires slightly different considerations we have not
  included it here.
\end{remark}

Taking the function $g$ to be identically one in \eqref{e:Eg-VRJP-thm} implies that
\begin{equation} \label{e:twopoint-H22-VRJP}
  \avg{y_ay_b}_{\HH^{2|2}} = \int_0^\infty \E_{a,0}^\beta(1_{X_t=b}) \, e^{-ht} \, dt.
\end{equation}
The right-hand side can be interpreted as the two-point function of
the VRJP with a uniform killing rate $h$.
\begin{remark}
  Theorem~\ref{thm:VRJP-Dynkin} can be extended in a straightforward
  way to the case in which $h = (h_{i})_{i\in\Lambda}$ is
  non-constant, provided $h_{i}\geq 0$ and at least one value is
  strictly positive.
\end{remark}

\subsubsection{Hyperbolic Mermin--Wagner Theorem}
\label{sec:HMW}

In this section we assume that $\Lambda = \Lambda_L$ is
the discrete $d$-dimensional torus $\Z^{d}/(L\Z)^{d}$ of side length $L\in\N$,
and that $\beta$ is translation invariant and finite-range.  We will
write $\avg{\cdot}=\avg{\cdot}_{\beta,h}$ in place of
$\avg{\cdot}_{\HH^{n}}$ and $\avg{\cdot}_{\HH^{2|2}}$.  Denote
\begin{equation}
  \label{e:lambda}
  \lambda(p) \bydef \sum_{j\in \Lambda} \beta_{0j} (1-\cos(p\cdot j)),
  \qquad p\in \Lambda^{\star},
\end{equation}
where here $\cdot$ is the Euclidean inner product on $\R^d$ and
$\Lambda^{\star}$ is the Fourier dual of the discrete torus $\Lambda$.
Denote the two-point function and its Fourier transform by
\begin{equation}
  G_{\beta,h}(j) =G^{L}_{\beta,h}(j) \bydef \avg{y_0y_j}_{\beta,h}, \qquad
  \hat G_{\beta,h}(p) = \hat G^{L}_{\beta,h}(p) =\sum_{j\in\Lambda} G_{\beta,h}(j) e^{i(p\cdot j)}.
\end{equation}
The following theorem is an analogue of the Mermin--Wagner Theorem for
the $O(n)$ model, in the form presented in \cite{MR610687}.

\begin{theorem} \label{thm:MW}
Let $\Lambda = \Z^{d}/(L\Z)^{d}$, $L\in\N$. For the
$\HH^{n}$ model, $n\geq 2$, with magnetic field $h > 0$,
\begin{equation} \label{e:MW-Hn}
  \hat G_{\beta,h}(p)
  \geq
  \frac{1}{ (1+(n+1)G_{\beta,h}(0)) \lambda(p) + h}
  .
\end{equation}
Similarly, for the $\HH^{2|2}$ model with $h > 0$, 
\begin{equation} \label{e:MW-H22}
  \hat G_{\beta,h}(p)
  \geq
  \frac{1}{ (1+G_{\beta,h}(0)) \lambda(p) + h}
  .
\end{equation}
\end{theorem}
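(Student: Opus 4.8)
The plan is to prove Theorem~\ref{thm:MW} by a Ward-identity and Cauchy--Schwarz argument (a classical Bogoliubov-type inequality, of the kind used for the Mermin--Wagner theorem; cf.\ \cite{MR610687}), applied to the non-compact boost symmetries of $\HH^n$ (resp.\ $\HH^{2|2}$) that mix the chosen transverse coordinate $y=y^r$ with the field coordinate $z$. The only genuinely new point is that, although these symmetry groups are non-amenable, the ``spin-wave stiffness'' entering the argument is controlled: it is governed by $\avg{z_iz_j}$, which is finite and at most $1+(n+1)G_{\beta,h}(0)$ because of the hyperboloid constraint $z^2=1+x^2+\sum_r(y^r)^2$, so the estimate closes self-consistently ($G_{\beta,h}(0)<\infty$ for $h>0$).

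For a real field $\phi=(\phi_i)_{i\in\Lambda}$ let $u^\phi$ be obtained by applying at each site $i$ the hyperbolic rotation $B_{\phi_i}$ in the $(y,z)$-plane. Each $B_{\phi_i}$ is an isometry of $\HH^n$, so $\mu^{\otimes\Lambda}$ is invariant under $u\mapsto u^\phi$; and since the $(y,z)$-block of $-u_i\cdot u_j$ is a two-dimensional Minkowski form, $B_{\phi_i}$ and $B_{\phi_j}$ combine into $B_{\phi_i-\phi_j}$, which gives, with $\mathcal X_\phi:=\frac{d}{ds}H(u^{s\phi})|_{s=0}$ and $\mathcal Y_\phi:=\frac{d^2}{ds^2}H(u^{s\phi})|_{s=0}$,
\begin{align*}
  \mathcal X_\phi &= -\sum_{\avg{ij}}\beta_{ij}(\phi_i-\phi_j)(z_iy_j-y_iz_j) + h\sum_i\phi_i y_i,\\
  \mathcal Y_\phi &= \sum_{\avg{ij}}\beta_{ij}(\phi_i-\phi_j)^2(z_iz_j-y_iy_j) + h\sum_i\phi_i^2 z_i.
\end{align*}
Because $\mu^{\otimes\Lambda}$ is $B$-invariant, the change of variables $v=u^{s\phi}$ shows $\int A(u^{s\phi})\,e^{-H(u^{s\phi})}\,\mu^{\otimes\Lambda}(du)$ is independent of $s$; differentiating at $s=0$, and doing likewise with $A\equiv1$, gives the Ward identities $\avg{A\,\mathcal X_\phi}=\avg{\delta_\phi A}$ and $\avg{\mathcal X_\phi^2}=\avg{\mathcal Y_\phi}$ (the latter using $\avg{\mathcal X_\phi}=0$, from the $y\mapsto-y$ symmetry of $H$), where $\delta_\phi A:=\frac{d}{ds}A(u^{s\phi})|_{s=0}$.

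Next, apply Cauchy--Schwarz to $\avg{\bar B\,\mathcal X_\phi}=\avg{\delta_\phi\bar B}$ for $B=\sum_a c_a y_a$: since $\delta_\phi y_a=z_a\phi_a$ we get $\avg{\delta_\phi\bar B}=\avg{z_0}\sum_a\bar c_a\phi_a$ by translation invariance, while $\avg{\abs{B}^2}=(c,Gc)$ with $G$ the two-point function viewed as a (positive) convolution matrix; optimizing over $c$ yields the master inequality
\begin{equation*}
  \avg{z_0}^2\,(\phi,G^{-1}\phi)\ \leq\ \avg{\mathcal X_\phi^2}\ =\ \avg{\mathcal Y_\phi}\qquad\text{for all real }\phi .
\end{equation*}
The right side is estimated using $z_iz_j\leq\tfrac12(z_i^2+z_j^2)$, $\abs{y_iy_j}\leq\tfrac12(y_i^2+y_j^2)$, the constraint $z_i^2=1+x_i^2+\sum_r(y_i^r)^2$, the $SO(n)$ symmetry in $x,y^1,\dots,y^{n-1}$ (so $\avg{x_0^2}=\avg{(y_0^r)^2}=G_{\beta,h}(0)$), and translation invariance: $\avg{z_iz_j-y_iy_j}\leq 1+(n+1)G_{\beta,h}(0)=:C$ and $\avg{z_i}=\avg{z_0}\geq1$ (as $z\geq1$), so $\avg{\mathcal Y_\phi}\leq C\sum_{\avg{ij}}\beta_{ij}(\phi_i-\phi_j)^2+\avg{z_0}h\sum_i\phi_i^2$. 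Passing to Fourier and choosing $\phi$ a real cosine at frequency $p$ (so $\hat\phi$ is supported on $\{p,-p\}$), the master inequality reduces to $\avg{z_0}^2/\hat G_{\beta,h}(p)\leq C\lambda(p)+\avg{z_0}h$, i.e.\ $\hat G_{\beta,h}(p)\geq \avg{z_0}^2\big/(C\lambda(p)+\avg{z_0}h)\geq 1/(C\lambda(p)+h)$ using $\avg{z_0}\geq1$ in the last step; this is \eqref{e:MW-Hn}.

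For the $\HH^{2|2}$ model the same scheme is run inside the supersymmetric formalism of Section~\ref{sec:susy}: the $(y,z)$-boost is a symmetry of the $OSp(2,1|2)$-invariant superintegral and the two Ward identities are its superanalogues --- which first requires extending $\avg{\cdot}_{\HH^{2|2}}$ beyond the observables $F(y)$ of \eqref{e:H22-Real}, in particular to $z_0$, $z_0z_j$ and $z_0y_j-y_0z_j$ --- and the constants improve: the superspace constraint $z^2=1+x^2+y^2+2\xi\eta$, the bosonic--fermionic cancellation $\avg{x_0^2}+\avg{2\xi_0\eta_0}=0$, and $\avg{z_0}=1$ give $\avg{z_iz_j-y_iy_j}\leq1+G_{\beta,h}(0)$ and make the $h$-term exact, which is \eqref{e:MW-H22}. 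The step I expect to be the main obstacle is exactly the bound on $\avg{\mathcal Y_\phi}$: this is where the non-amenability of $SO(n,1)$/$OSp(2,1|2)$ could break the argument, and it is rescued by the observation that the hyperboloid constraint forces $\avg{z_iz_j}$ to be finite and linear in $G_{\beta,h}(0)$; a secondary, purely technical, difficulty is setting up the supersymmetric expectations and Ward identities for the non-$y$ observables above and verifying that the Grassmann contributions to $\mathcal Y_\phi$ assemble to the stated constant.
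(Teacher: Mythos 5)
Your $\HH^n$ argument is sound and is essentially the paper's argument in different clothing: both are Bogoliubov/Cauchy--Schwarz inequalities for a one-parameter subgroup of isometries mixing $y$ with $z$. The only difference is the choice of subgroup --- you use the pure $(y,z)$-boost ($\delta y=z$, $\delta z=y$, $\delta x=0$), while the paper uses the horospherical shift $s_j\mapsto s_j+\epsilon_j$, whose generator is boost \emph{plus} the $(x,y)$-rotation ($\delta x=-y$, $\delta y=x+z$, $\delta z=y$). Your numerator $\avg{z_0}\geq 1$ and your bound $\avg{z_jz_l-y_jy_l}\leq \avg{z_0^2}+\avg{y_0^2}=1+(n+1)G_{\beta,h}(0)$ reproduce the paper's constants exactly (the paper gets the same number from $\avg{(x_j+z_j)(x_l+z_l)}=\avg{x_jx_l+z_jz_l}\leq\avg{x_0^2}+\avg{z_0^2}$), so for $\HH^n$ your proof works.

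The $\HH^{2|2}$ half, however, has a genuine gap, and it sits precisely at the point you dismiss as a ``secondary, purely technical, difficulty.'' Running ``the same scheme inside the supersymmetric formalism'' requires the Cauchy--Schwarz inequality $|\avg{\bar B\,\mathcal X_\phi}|^2\leq\avg{|B|^2}\avg{|\mathcal X_\phi|^2}$ for \emph{superexpectations}. Your boost current is $\delta_\phi y_a=z_a\phi_a$, and in $\HH^{2|2}$ the variable $z_a=\sqrt{1+x_a^2+y_a^2+2\xi_a\eta_a}$ has a nonzero Grassmann component; hence $\mathcal X_\phi$ is a form of nonzero degree, and $\avg{F\bar F}_{\HH^{2|2}}$ is a Berezin integral that need not be non-negative (e.g.\ for $F=a+b\,\xi\eta$ one has $F^2=a^2+2ab\,\xi\eta$, whose superintegral can be made negative). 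So Cauchy--Schwarz is simply not available for your choice of symmetry, and no amount of care in ``setting up the supersymmetric Ward identities'' repairs this. This is exactly why the paper uses the parabolic (horospherical-shift) direction rather than the boost: its current is $\partial y_j/\partial s_j=x_j+z_j=e^{t_j}$, in which the Grassmann parts of $x_j$ and $z_j$ cancel, so $S(p)$, $D\Hhoro$ and $D\bar D\Hhoro$ are all honest functions of $(s,t)$ and the whole Cauchy--Schwarz argument lives in the genuine probability measure $e^{-\Hhoro}\,dt\,ds$ of \eqref{e:H22-Real}; supersymmetric localisation enters only through the scalar Ward identities $\avg{e^{t_j}}=1$ and $\avg{e^{t_j+t_l}}=1+\avg{y_jy_l}$. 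To fix your proof you should either switch to this parabolic generator in the supersymmetric case, or supply an argument (which you have not) that the specific superexpectations appearing in your inequality are realised as moments of a positive measure.
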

\begin{remark}
  \label{rem:twopoint-H22-VRJP}
  By \eqref{e:twopoint-H22-VRJP} the two-point function $G_{\beta,h}$
  equals that of the VRJP in the case of the $\HH^{2|2}$ model, and
  hence the two-point function of the VRJP satisfies \eqref{e:MW-H22}
  as well.
\end{remark}

\begin{remark}
  For $d \geq 3$, the bound \eqref{e:MW-H22}  shows that $\tilde
  f$ can be replaced by $f$ in \cite[Theorem~3]{MR2728731} using the upper bound
  proved there for $G_{\beta,h}(0)$.
\end{remark}

\begin{corollary}
  \label{cor:GasympLE2}
  Under the assumptions of Theorem~\ref{thm:MW},
  for $d=1,2$,
  \begin{equation} \label{e:GasympLE2}
    \lim_{h\downarrow 0} \lim_{L \to\infty} G_{\beta,h}(0) = \infty.
  \end{equation}
\end{corollary}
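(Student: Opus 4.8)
The plan is to insert the Mermin--Wagner lower bound of Theorem~\ref{thm:MW} into the Parseval identity for $G_{\beta,h}(0)$, and then exploit that $\lambda(p)$ vanishes only quadratically at $p=0$ (because $\beta$ is finite-range) to produce a spectral integral that diverges as $h\downarrow0$ when $d=1,2$.

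First I would write, by Fourier inversion on $\Lambda_L = \Z^d/(L\Z)^d$,
\begin{equation*}
  G^{L}_{\beta,h}(0)
  = \frac{1}{\abs{\Lambda_L}} \sum_{p\in\Lambda_L^{\star}} \hat G^{L}_{\beta,h}(p)
  \geq \frac{1}{\abs{\Lambda_L}} \sum_{p\in\Lambda_L^{\star}} \frac{1}{\pa{1+c\,G^{L}_{\beta,h}(0)}\lambda(p) + h},
\end{equation*}
where the inequality is \eqref{e:MW-Hn} for the $\HH^n$ model, with $c = n+1$, respectively \eqref{e:MW-H22} for the $\HH^{2|2}$ model, with $c=1$. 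Since $\beta$ is finite-range, $1-\cos(p\cdot j)\leq\tfrac12(p\cdot j)^2$ together with $(p\cdot j)^2\leq\abs{p}^2\abs{j}^2$ yields $\lambda(p)\leq C\abs{p}^2$ with $C \bydef \tfrac12\sum_j\beta_{0j}\abs{j}^2<\infty$, and as the summand is decreasing in $\lambda(p)$ I may replace $\lambda(p)$ by $C\abs{p}^2$ in the lower bound.

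Next I would pass to $L\to\infty$. If $\liminf_{L\to\infty}G^{L}_{\beta,h}(0)=\infty$ there is nothing to prove for that $h$. Otherwise, writing $\underline G_h$ for this (finite) $\liminf$ and passing along a subsequence that realizes it, the summand is bounded by $1/h$ and converges uniformly in $p$ (as $\lambda$ is bounded and $G^{L}_{\beta,h}(0)\to\underline G_h$), so the Riemann sums converge to the corresponding integral, giving
\begin{equation*}
  \underline G_h \geq \int_{[-\pi,\pi]^d} \frac{dp}{(2\pi)^d}\,\frac{1}{\pa{1+c\,\underline G_h}C\abs{p}^2 + h}.
\end{equation*}
To finish, suppose \eqref{e:GasympLE2} fails; then $\underline G_{h_n}\leq M<\infty$ along some sequence $h_n\downarrow0$, so $\pa{1+c\,\underline G_{h_n}}C\leq \pa{1+cM}C =: B$, and since the integral above is decreasing in the coefficient of $\abs{p}^2$,
\begin{equation*}
  \underline G_{h_n} \geq \int_{[-\pi,\pi]^d} \frac{dp}{(2\pi)^d}\,\frac{1}{B\abs{p}^2 + h_n}.
\end{equation*}
The last integral is $\asymp (Bh_n)^{-1/2}$ for $d=1$ and $\asymp B^{-1}\log(1/h_n)$ for $d=2$, hence tends to $\infty$ as $h_n\downarrow0$, contradicting $\underline G_{h_n}\leq M$. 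Therefore $\underline G_h\to\infty$ as $h\downarrow0$, and since the argument applies verbatim to any subsequential limit in $L$ this yields \eqref{e:GasympLE2}.

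I expect the main obstacle to be the self-consistent character of the bound: the factor $1+c\,G^{L}_{\beta,h}(0)$ on the right-hand side is exactly the quantity we are trying to bound from below, so a large value of $G^{L}_{\beta,h}(0)$ weakens rather than helps the inequality. What saves the argument is that this factor enters only as a single overall multiplicative constant in front of $\abs{p}^2$, inside a spectral integral which in $d\leq2$ diverges as $h\downarrow0$ no matter how large (but finite) that constant is; a family $\{G^{L}_{\beta,h}(0)\}$ that stayed bounded as $L\to\infty$ and $h\downarrow0$ would therefore be inconsistent with the inequality. The remaining ingredients are routine: the uniform-in-$p$ control needed to exchange the $L\to\infty$ limit with the Riemann sum, and the elementary estimate $\int_{\abs{p}\leq\pi}\pa{B\abs{p}^2+h}^{-1}\,dp\to\infty$ as $h\downarrow0$ when $d=1,2$.
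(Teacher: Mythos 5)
Your proof is correct and follows essentially the same route as the paper's: sum the Mermin--Wagner bound over the dual torus via Fourier inversion, use the finite-range hypothesis to get $\lambda(p)\leq C|p|^2$, and observe that the resulting spectral sum diverges as $h\downarrow 0$ in $d\leq 2$ for any bounded value of the self-consistent prefactor $1+c\,G_{\beta,h}(0)$, forcing a contradiction. The only difference is that you spell out the $L\to\infty$ limit interchange and the contradiction argument in more detail than the paper does.
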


\begin{proof}
Since $(2\pi L)^{-d} \sum_{p \in \Lambda^*} e^{i(p\cdot j)} = 1_{j=0}$,
summing the bounds \eqref{e:MW-Hn} and \eqref{e:MW-H22} over
$p\in\Lambda^{\star}$ and interchanging sums implies (with $n=0$ for $\HH^{2|2}$)
\begin{equation}
  G_{\beta,h}(0)
  \geq
  \frac{1}{(2\pi L)^d} \sum_{p\in\Lambda^{\star}} \frac{1}{(1+(n+1)G_{\beta,h}(0)) \lambda(p) + h}.
\end{equation}
The assumption of $\beta$ being finite-range and non-negative implies $\lambda(p) \leq C(\beta)|p|^2$.
If $d\leq 2$ it follows that
\begin{equation}
  \lim_{L\to\infty} \frac{1}{(2\pi L)^d} \sum_{p \in \Lambda^{\star}}
  \frac{1}{\lambda(p)+h} \uparrow \infty \quad \text{as } h\downarrow 0,
\end{equation}
and, as $G_{\beta,h}\geq 0$, this implies \eqref{e:GasympLE2}.
\end{proof}

\begin{remark} 
  \label{rk:localisation} 
  In fact, the proof shows $G_{\beta,h}(0) \geq c_\beta/\sqrt{\log h}$
  with $c_\beta >0$ when $h>0$ is small.  For the $\HH^{2|2}$ model,
  we conjecture that the optimal bound is
  $G_{\beta,h}(0) \asymp c_\beta/h$ for $h$ small, with $c_{\beta}>0$
  exponentially small as $\beta$ becomes large.  This is consistent
  with Anderson localisation.  On the other hand, for the $\HH^{n}$
  model with $n\geq 2$, localisation is not expected, i.e.,
  $G_{\beta,h}(0) \ll 1/h$.
\end{remark}

\subsubsection{Consequences for the vertex-reinforced jump process}
\label{sec:pf-rec}

In contrast to Corollary~\ref{cor:GasympLE2},
it has been proven \cite{MR2104878,MR2728731} that when $d\geq 3$
and $\beta_{ij}=\beta 1_{|i-j|=1}$,
\begin{equation} \label{e:GasympGE3}
  \lim_{h\downarrow 0} \lim_{L \to\infty} G_{\beta,h}(0) < \infty
\end{equation}
for all $\beta>0$ in the case of $\HH^2$ and for all sufficiently
large $\beta>0$ for $\HH^{2|2}$. In the $\HH^{2|2}$ case
\eqref{e:GasympGE3} corresponds to transience of the VRJP (in the
sense of bounded expected local time, see
Corollary~\ref{cor:vrjp-trans} below) and to the uniform boundedness
(in the spectral parameter $z\in\C_{+}$) of the expected square of the
absolute value of the resolvent for random band matrices in the sigma
model approximation \cite{MR2953867} (recall
Section~\ref{sec:intro}). It also implies that the hyperbolic symmetry
is spontaneously broken.

Due to the non-amenability of hyperbolic group actions, the question
of spontaneous symmetry breaking for hyperbolic sigma models is, in
general, subtle.  The usual formulations of the Mermin--Wagner theorem
for models with compact symmetries cannot hold in the non-amenable
case~\cite{MR2778578}, and, in fact, spontaneous symmetry breaking
appears to occur in all dimensions \cite{MR2153656,MR2189377}.
Nonetheless, \eqref{e:GasympLE2} and \eqref{e:GasympGE3} show that the
two-point function --- the observable of interest for the VRJP and the
random matrix problem --- does undergo a transition analogous to that
occurring in systems with compact symmetries.

\begin{proof}[Proof of Theorem~\ref{thm:recurrence}]
We must prove that for any translation invariant finite-range $\beta$
\begin{equation}
  \int_0^\infty \E^{\beta,\Z^d}_{0,0}(1_{X_t=0}) \, dt =\infty,
\end{equation}
where the expectation refers to that of the VRJP on $\Z^d$ and $d=1,2$.
This is true since, for any finite-range $\beta$, one has
\begin{align}
  \int_0^\infty \E^{\beta,\Z^d}_{0,0}(1_{X_t=0}) \, dt
  &=
    \lim_{h\downarrow 0} \int_0^\infty \E^{\beta,\Z^d}_{0,0}(1_{X_t=0})\, e^{-ht} \, dt
    \nonumber\\
  &=
    \lim_{h\downarrow 0}
    \lim_{L\to\infty} \int_0^\infty
    \E^{\beta,\Lambda_L}_{0,0}(1_{X_t=0}) \, e^{-ht} \, dt
    = \infty
    .
\end{align}
The first equality is by monotone convergence, and the final
equality is obtained by combining \eqref{e:GasympLE2} for the
$\HH^{2|2}$ model and \eqref{e:twopoint-H22-VRJP}.

For the second equality it suffices, by using the tail of the
exponential $e^{-ht}$, to verify that the integrand converges for
$t\leq T$ for any bounded $T$.  Since the jump rate $1+L_t^i$ is
bounded by $1+T$, the walk is exponentially unlikely to take more
than $O(T^{3})$ jumps to new vertices up to time $T$. 
VRJPs on $\Lambda_L$ and $\Z^d$ can be coupled to be the same
until they exit a ball of radius less than $\frac12 L$,
an event which requires at least $L/R$ jumps to occur,
where $R$ is the radius of the finite-range step distribution.
This completes the proof.
\end{proof}

The analogue of Theorem~\ref{thm:recurrence} for the ERRW with
constant initial weights was established in
\cite{MR2561431,SabotZengPubl}, but not for the VRJP. Mermin--Wagner type
theorems have also been proven for the ERRW in one and two
dimensions~\cite{MR2561431,MR2529432}.
The techniques used deal
directly with ERRWs, and hence are rather different from those
employed in this paper.

Our relation between the two-point functions of the $\HH^{2|2}$ model and the VRJP
also yields a transience result.

\begin{corollary}
  \label{cor:vrjp-trans}
  The vertex-reinforced jump process $(X_{t})$ on $\Z^{d}$, $d\geq 3$,
  with initial rates $\beta_{ij}=\beta 1_{|i-j|=1}$ and $\beta$
  sufficiently large is transient, in the sense that the expected time
  $(X_{t})$ spends at the origin is finite.
\end{corollary}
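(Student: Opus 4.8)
The plan is to reduce the statement to the boundedness \eqref{e:GasympGE3} of the $\HH^{2|2}$ two-point function for $d\geq 3$ and large $\beta$, via the same torus-to-$\Z^d$ passage used in the proof of Theorem~\ref{thm:recurrence}. Fix $d\geq 3$, the nearest-neighbour rates $\beta_{ij}=\beta 1_{|i-j|=1}$ with $\beta$ large, and work on the torus $\Lambda_L=\Z^d/(L\Z)^d$. By \eqref{e:twopoint-H22-VRJP} for the $\HH^{2|2}$ model,
\[
  \int_0^\infty \E^{\beta,\Lambda_L}_{0,0}(1_{X_t=0}) \, e^{-ht}\,dt \;=\; G^{L}_{\beta,h}(0), \qquad h>0.
\]
First I would send $L\to\infty$ with $h$ fixed. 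VRJPs on $\Lambda_L$ and on $\Z^d$ started at $0$ can be coupled to agree until they exit the ball of radius $\frac12 L$, and for fixed $t$ this exit probability tends to $0$ as $L\to\infty$ (exiting requires at least $\frac12 L$ jumps, while the bounded jump rate $1+L^i_t\leq 1+t$ makes so many jumps exponentially unlikely, exactly as in the proof of Theorem~\ref{thm:recurrence}); splitting the $t$-integral at a large $T$ and using $\int_T^\infty e^{-ht}\,dt\leq h^{-1}e^{-hT}$ together with bounded convergence on $[0,T]$ yields
\[
  \int_0^\infty \E^{\beta,\Z^d}_{0,0}(1_{X_t=0}) \, e^{-ht}\,dt \;=\; \lim_{L\to\infty} G^{L}_{\beta,h}(0),
\]
in particular the right-hand limit exists.

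Next I would let $h\downarrow 0$. On the left, $\E^{\beta,\Z^d}_{0,0}(1_{X_t=0})\,e^{-ht}$ increases pointwise in $t$ to $\E^{\beta,\Z^d}_{0,0}(1_{X_t=0})$, so by monotone convergence the left side increases to $\int_0^\infty \E^{\beta,\Z^d}_{0,0}(1_{X_t=0})\,dt$, which is precisely the expected time the VRJP on $\Z^d$ spends at the origin. On the right, $\lim_{h\downarrow 0}\lim_{L\to\infty} G^{L}_{\beta,h}(0)$ is finite by \eqref{e:GasympGE3} for $d\geq 3$ and $\beta$ sufficiently large (the content of \cite{MR2104878,MR2728731}). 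Hence the expected occupation time of the origin is finite, i.e.\ the VRJP is transient in the stated sense.

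The only point needing care is the $L\to\infty$ interchange for a fixed $h>0$: one must make the coupling quantitative and dominate the $t$-integral uniformly in $L$. This is, however, the same estimate already carried out at the end of the proof of Theorem~\ref{thm:recurrence}, so no genuinely new argument is required; everything else is a direct application of the identity \eqref{e:twopoint-H22-VRJP} and monotone convergence.
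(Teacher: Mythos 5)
Your argument is correct and is precisely the one the paper intends: the published proof of Corollary~\ref{cor:vrjp-trans} simply says to mirror the proof of Theorem~\ref{thm:recurrence} with \eqref{e:GasympGE3} replacing \eqref{e:GasympLE2}, which is exactly the chain of identities (via \eqref{e:twopoint-H22-VRJP}, the torus-to-$\Z^d$ coupling, and monotone convergence in $h$) that you spell out. No differences to report.
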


\begin{proof}
  The argument mirrors the proof of Theorem~\ref{thm:recurrence}, using
  \eqref{e:GasympGE3} in place of \eqref{e:GasympLE2}.
\end{proof}

Transience in the sense of visiting the origin finitely often almost
surely when $\beta$ is sufficiently large was established
in~\cite[Corollary~4]{MR3420510}; this result also makes use of
\cite{MR2728731}.  As with recurrence, see the discussion following
the statement of Theorem~\ref{thm:recurrence}, there is in general no
relation between the two notions of transience.

\section{Supersymmetry and horospherical coordinates}
\label{sec:susy}

In this section we define horospherical coordinates for $\HH^n$ and
then define the supersymmetric $\HH^{2|2}$ model precisely.  We also
collect Ward identities and relations between derivatives that will be
used in the proofs of Theorems~\ref{thm:VRJP-Dynkin} and \ref{thm:MW}.

\subsection{Horospherical coordinates}
\label{sec:horo}

As observed in \cite{MR1134935,MR2104878}, the hyperbolic spaces
$\HH^n$ are naturally parametrised by horospherical coordinates that
are useful for the analysis of the corresponding sigma models.
For $\HH^{n}$, these are global coordinates $t\in \R$,
$\tilde s \in \R^{n-1}$, in terms of which
\begin{equation} \label{e:xyzhoro}
  x = \sinh t -\frac12 |\tilde s|^{2} e^t,
  \;\; 
  y^i = e^{t} s^i 
  \;\;(i=1, \dots, n-1),
  \;\;
  z = \cosh t + \frac12 |\tilde s|^2 e^t.
\end{equation}
Both $x,z$ are scalars while
$\tilde y = (y^{1},\dots, y^{n-1})$ and
$\tilde s= (s^{1},\dots, s^{n-1})\in \R^{n-1}$ are $n-1$ dimensional vectors and
$|\tilde s|^2 = \sum_{i=1}^{n-1}(s^{i})^{2}$.
By this change of variables one has
(see Appendix~\ref{app:horo}),
\begin{equation}
  \label{e:HS-vol}
  \int_{(\HH^{n})^\Lambda} F(u) \,\mu^{\otimes\Lambda}(du)
  = \int_{(\R^{n})^{\Lambda}} F(u(\tilde s,t)) \, \prod_i
  e^{(n-1)t_i}\, dt_i \, d\tilde s_i.
\end{equation}
By a short calculation,
\begin{equation}
  \label{e:uxuy-notSUSY}
  -u_i \cdot u_j
  = \cosh(t_i-t_j) +\frac12 |\tilde s_i-\tilde s_j|^{2}e^{t_i+t_j},\qquad 
  z_i = \cosh t_i + \frac12 |\tilde s_i|^{2}e^{t_i}.
\end{equation}
Thus in horospherical coordinates,
\begin{multline} \label{e:Hhoro-Hn}
  H(\tilde s,t)
  = \sum_{ \avg{ij}} \beta_{ij} \pa{ \cosh(t_i-t_j)-1 +\frac12 |\tilde s_i-\tilde s_j|^{2}e^{t_i+t_j}}
  \\
  + h \sum_i \pa{ \cosh(t_i)-1 + \frac12 |\tilde s_i|^{2}e^{t_i}},
\end{multline}
where by a slight abuse of notation we have re-used the symbol
$H$. Moreover, the following relations, in which we set
$s_{i}=s_{i}^{r}$ and $y_{i}=y_{i}^{r}$ for some fixed
$r=1,\dots,n-1$, hold:
\begin{equation} \label{e:sderiv}
 \ddp{z_i}{s_i} = y_i, \qquad
  \ddp{y_i}{s_i} = x_i+z_i, \qquad
  \ddp{(u_i\cdot u_j)}{s_i} = y_j (x_i+z_i) - y_i(x_j+z_j).
\end{equation}
Furthermore,
\begin{equation} \label{e:d2zu}
  \begin{split}
\ddp{^2}{s_j^2} z_j
  &= e^{t_j} = x_j+z_j, \\
  \ddp{^2}{s_i\partial s_l} (-1-u_j \cdot u_l)
  &=
  \begin{cases}
   -e^{t_j+t_l} = -(x_j+z_j)(x_l+z_l), & i=j ,\\ 
   + e^{t_j+t_{l}} = + (x_j+z_j)(x_{l}+z_{l}), & i=l, \\ 
   0, & \text{else}.
  \end{cases}
  \end{split}
\end{equation}

\subsection{Supersymmetry}

Let $\Lambda$ be a finite set. We will define an algebra
$\Omega_\Lambda$ of forms (which generalise random variables) that
constitute the observables on the super-space $(\R^{2|2})^\Lambda$.
The super-space itself only has meaning through this algebra of
observables.  We also define an integral associated to this algebra.
We then introduce the supersymmetry
generator and the localisation lemma. For a more detailed introduction
to the mathematics of supersymmetry, see, e.g.,
\cite{MR2525670,MR1143413,MR2728731}.  

\subsubsection{Supersymmetric integration}
\label{sec:supersymm-integr}
For each vertex $i \in \Lambda$, let $x_i,y_i$ be real variables and
$\xi_i,\eta_i$ be two Grassmann variables.  Thus by definition all of
the $x_i$ and $y_i$ commute with each other and with all of the
$\xi_i$ and $\eta_i$ and all of the $\xi_i$ and $\eta_i$ anticommute.
The way in which the anticommutation relations are realized is
unimportant, but concretely, we can define an algebra of
$4^{|\Lambda|}\times 4^{|\Lambda|}$ matrices $\xi_i$ and $\eta_i$
realising the required anticommutation relations for the Grassmann
variables.
To fix signs in forthcoming expressions, fix an arbitrary order
$i_{1},\dots, i_{|\Lambda|}$ of the vertices in $\Lambda$.

We define the algebra $\Omega_\Lambda$ to be the algebra of smooth
functions on $(\R^2)^{\Lambda}$ with values in the algebra of
$4^{|\Lambda|}\times 4^{|\Lambda|}$ matrices that have the form
\begin{equation} \label{e:Fexpand}
  F = \sum_{I,J \subset \Lambda} F_{I,J}(x,y) (\eta\xi)_{I,J},
\end{equation}
where the coefficients $F_{I,J}$ are smooth functions on
$(\R^2)^{\Lambda}$, and $(\eta\xi)_{I,J}$ is given by the ordered
product $\prod_{i\in I\cap J}\eta_{i}\xi_{i}\prod_{i\in I\setminus
  J}\xi_{i}\prod_{j\in J\setminus I}\eta_{j}$. This ordering has been
chosen so that $(\eta\xi)_{\Lambda,\Lambda}$ is
$\eta_{1}\xi_{1}\dots\eta_{\Lambda}\xi_{\Lambda}$.  We call elements
of $\Omega_\Lambda$ forms because the forms of differential geometry
are instances \cite{MR2525670,MR941982}.  The integral (sometimes
called a superintegral) of a form $F\in \Omega_\Lambda$ is defined by
\begin{equation} \label{e:intFdef}
  \int_{(\R^{2|2})^\Lambda} F
  \bydef
  \int_{(\R^2)^{\Lambda}}
  F_{\Lambda,\Lambda}(x,y) \prod_{i\in\Lambda} \frac{dx_i \, dy_i}{2\pi},
\end{equation}
where $\R^{2|2}$ refers to the number of commuting and anticommuting variables.
 
The \emph{degree} of a coefficient $F_{I,J}$ is $|I|+|J|$.  Thus the
integral of a form $F$ is a constant multiple of the usual Lebesgue
integral of the top degree part of $F$.  A form $F \in \Omega_\Lambda$
is \emph{even} if the degree of all non-vanishing coefficients
$F_{I,J}$ is even in \eqref{e:Fexpand}.  Even forms commute.  For even
forms $F^1, \dots, F^p$ and a smooth function $g \in C^\infty(\R^p)$,
the form $g(F^1, \dots, F^p) \in \Omega_\Lambda$ is defined by
formally Taylor expanding $g$ about the degree-$0$ part
$(F^1_{\varnothing,\varnothing}(x,y), \dots,
F^p_{\varnothing,\varnothing}(x,y))$. This is well-defined as there is
no ambiguity in the ordering if the $F^{i}$ are all even, and the
anticommutation relations satisfied by the $\xi_{i}$ and $\eta_{i}$
imply the expansion is finite.

\subsubsection{Localisation}
\label{sec:localisation}

Temporarily set $x=x_{i}, y=y_{i},\xi=\xi_{i}$, and $\eta=\eta_{i}$.
Define an operator $\partial_{\eta} \colon \Omega_\Lambda \to \Omega_\Lambda$ by
linearity, $\partial_{\eta} (\eta F) = F$, and
$\partial_{\eta} F=0$ if $F$ does not contain a factor $\eta$.
Define $\partial_{\xi}$ in the same manner.
Define $Q_{i}$ by its action on forms $F$ by
\begin{equation}
  Q_{i}F \bydef \xi\partial_{x}F + \eta\partial_{y}F + x\partial_{\eta} F - y\partial_{\xi} F.
\end{equation}
The \emph{supersymmetry generator} $Q$ acts on a form $F \in \Omega_\Lambda$ by
$QF \bydef \sum_{i\in\Lambda} Q_iF$.

\begin{definition}
  $F \in \Omega_\Lambda$ is \emph{supersymmetric} if $QF=0$.
\end{definition}

The supersymmetry generator acts as an anti-derivation on the algebra of
forms, see, e.g., \cite[Section 6]{MR2525670}.  This implies that the forms
\begin{equation}
  \tau_{ji}= \tau_{ij}
  \bydef x_ix_j+y_iy_j + \xi_i\eta_j - \eta_i\xi_j, \qquad i,j \in \Lambda,
\end{equation}
are supersymmetric. Moreover, any smooth function of the $\tau_{ij}$ is
  supersymmetric as $Q$ obeys a chain rule, see~\cite[Equation~(6.5)]{MR2525670}.
The following \emph{localisation lemma} is fundamental. For a proof, see \cite[Lemma~16]{MR2728731}.

\begin{lemma}[Localisation lemma] \label{lem:SUSY-localisation} Let
  $F \in \Omega_\Lambda$ be a smooth form with sufficient decay that
  is supersymmetric, i.e., satisfies $QF=0$.  Then
  \begin{equation} \label{e:SUSY-localisation}
    \int_{(\R^{2|2})^\Lambda} F = F_{\varnothing,\varnothing}(0,0).
  \end{equation}
\end{lemma}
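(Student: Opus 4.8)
The plan is to prove the localisation lemma by a supersymmetric homotopy (``localisation'') argument, resting on two ingredients: (i) the superintegral of a $Q$-exact form vanishes; and (ii) the quadratic action $T \bydef \sum_{i\in\Lambda}\tau_{ii}$ is $Q$-exact, so that inserting $e^{-\lambda T}$ interpolates continuously, without changing the value of the integral, between $F$ at $\lambda = 0$ and a Gaussian that localises to the origin as $\lambda\to\infty$.

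For ingredient (i), I would show $\int_{(\R^{2|2})^\Lambda} QG = 0$ for every smooth form $G$ with sufficient decay. Write $G = \sum_{I,J} G_{I,J}(x,y)\,(\eta\xi)_{I,J}$ and recall that the superintegral reads off the coefficient of $(\eta\xi)_{\Lambda,\Lambda}$. Examining $Q_iG = \xi_i\partial_{x_i}G + \eta_i\partial_{y_i}G + x_i\partial_{\eta_i}G - y_i\partial_{\xi_i}G$ at each site $i$: after $\partial_{\eta_i}$ (resp.\ $\partial_{\xi_i}$) the resulting form contains no factor $\eta_i$ (resp.\ $\xi_i$), so the last two terms cannot produce $(\eta\xi)_{\Lambda,\Lambda}$ and contribute nothing to the superintegral; the first two terms contribute only $\pm\,\partial_{x_i}G_{\Lambda\setminus\{i\},\Lambda}$ and $\pm\,\partial_{y_i}G_{\Lambda,\Lambda\setminus\{i\}}$ to the coefficient of $(\eta\xi)_{\Lambda,\Lambda}$, whose $x_i$- resp.\ $y_i$-integral vanishes by the decay hypothesis. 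Summing over $i\in\Lambda$ gives $\int QG = 0$.

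For ingredient (ii), a direct computation from the definition of $Q_i$ yields $Q_i(\eta_i x_i - \xi_i y_i) = x_i^2 + y_i^2 + 2\xi_i\eta_i = \tau_{ii}$, so with $\Psi \bydef \sum_{i\in\Lambda}(\eta_i x_i - \xi_i y_i)$ one has $Q\Psi = T$. Set $I(\lambda) \bydef \int_{(\R^{2|2})^\Lambda} e^{-\lambda T} F$ for $\lambda > 0$. Since $e^{-\lambda T}$ is a smooth function of the $\tau_{ii}$, it is supersymmetric, hence (using $Qe^{-\lambda T}=0$) $Q(e^{-\lambda T}F) = e^{-\lambda T}(QF) = 0$; as $\Psi$ is odd and $Q$ is an anti-derivation, $Q(\Psi\, e^{-\lambda T}F) = (Q\Psi)\,e^{-\lambda T}F = T\,e^{-\lambda T}F$. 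Differentiating under the integral sign, which is legitimate for $\lambda$ bounded away from $0$ because the factor $e^{-\lambda\sum_i(x_i^2+y_i^2)}$ supplies Gaussian decay, and applying (i) to $G = \Psi\,e^{-\lambda T}F$:
\[
  I'(\lambda) = -\int_{(\R^{2|2})^\Lambda} T\,e^{-\lambda T}F = -\int_{(\R^{2|2})^\Lambda} Q\bigl(\Psi\,e^{-\lambda T}F\bigr) = 0 .
\]
Thus $I$ is constant on $(0,\infty)$, and $I(\lambda)\to\int_{(\R^{2|2})^\Lambda}F$ as $\lambda\downarrow 0$ by dominated convergence, so this constant equals $\int F$.

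It remains to evaluate $\lim_{\lambda\to\infty} I(\lambda)$. Expand $e^{-\lambda T} = e^{-\lambda\sum_i(x_i^2+y_i^2)}\prod_{i\in\Lambda}(1 + 2\lambda\,\eta_i\xi_i)$; since each site contributes Grassmann degree $0$ or $2$ from the product, pairing with $F = \sum_{I,J}F_{I,J}(\eta\xi)_{I,J}$ forces $I = J = K$ for some $K\subset\Lambda$, and
\[
  I(\lambda) = \sum_{K\subset\Lambda} (2\lambda)^{|\Lambda\setminus K|}\int_{(\R^2)^\Lambda} e^{-\lambda\sum_i(x_i^2+y_i^2)}\,F_{K,K}(x,y)\prod_{i\in\Lambda}\frac{dx_i\,dy_i}{2\pi} .
\]
Rescaling $x_i, y_i \mapsto \lambda^{-1/2}x_i, \lambda^{-1/2}y_i$ and using $\int_{\R^2} e^{-(x^2+y^2)}\frac{dx\,dy}{2\pi} = \frac12$ shows the $K$-summand equals $(2\lambda)^{-|K|}\bigl(F_{K,K}(0,0) + o(1)\bigr)$ as $\lambda\to\infty$, which tends to $0$ unless $K = \varnothing$ and equals $F_{\varnothing,\varnothing}(0,0)$ for $K = \varnothing$. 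Hence $\int_{(\R^{2|2})^\Lambda} F = \lim_{\lambda\to\infty} I(\lambda) = F_{\varnothing,\varnothing}(0,0)$, which is \eqref{e:SUSY-localisation}. I expect the crux to be ingredient (i): the vanishing of $\int QG$ hinges on the ``balanced'' structure of $Q$ — in particular that the terms $x_i\partial_{\eta_i}$ and $y_i\partial_{\xi_i}$ can never reach top Grassmann degree — and requires care with the sign and ordering conventions of the Grassmann algebra, which also pervade the degree bookkeeping in the final limit; the Gaussian asymptotics themselves are routine.
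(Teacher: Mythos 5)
The paper does not prove this lemma itself; it cites \cite[Lemma~16]{MR2728731}, and your argument is essentially the standard localisation proof given there and in \cite{MR2525670}: $Q$-exact forms with decay integrate to zero, $\sum_i \tau_{ii}$ is $Q$-exact, and the resulting $\lambda$-independence of $\int e^{-\lambda \sum_i \tau_{ii}} F$ lets one evaluate the integral in the Gaussian limit $\lambda\to\infty$. Your individual steps — the identification of which terms of $Q_iG$ can reach top Grassmann degree, the identity $Q_i(\eta_i x_i - \xi_i y_i) = \tau_{ii}$, and the $\lambda\to\infty$ asymptotics isolating $K=\varnothing$ — all check out.
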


\subsection{The $\HH^{2|2}$ model}

We can now define the $\HH^{2|2}$ sigma model and justify our earlier
claim that its $y$ marginal is the probability measure
\eqref{e:H22-Real}.  Given $(x_{i},y_{i},\xi_{i},\eta_{i})$ as above
define an even variable $z_i$ by
\begin{equation}
  z_i
  \bydef \sqrt{1+x_i^2+y_i^2+2\xi_i\eta_i}
  = \sqrt{1+x_i^2+y_i^2} + \frac{\xi_i\eta_i}{\sqrt{1+x_i^2+y_i^2}},
\end{equation}
where the equality is by the definition of a function of a form.
We will write $u_i = (x_i,y_i,z_i,\xi_i,\eta_i)$. Define the ``inner product''
\begin{equation} \label{e:innerproduct-H22}
  u_i \cdot u_j \bydef x_ix_j+y_iy_j-z_iz_j+\xi_i\eta_j-\eta_i\xi_j,
\end{equation}
generalising the Minkowski inner product above \eqref{e:Hn-def}; we have written
``inner product'' as this is only terminology, since
\eqref{e:innerproduct-H22} is not a quadratic form in the classical
sense.  Then by a short calculation
\begin{equation}
  u_i \cdot u_i = -1,
\end{equation}
which we interpret as meaning that $u_i$ is in the supermanifold $\HH^{2|2}$.
Since $z_i = \sqrt{1+\tau_{ii}}$ and $u_i\cdot u_j = \tau_{ij} - z_iz_j$,
the forms $u_i \cdot u_j$ and $z_i$ are supersymmetric for all $i,j \in \Lambda$.

The $\HH^{2|2}$ integral of a form $F \in \Omega_\Lambda$ is defined by
\begin{equation}
  \label{e:integral-H22}
  \int_{(\HH^{2|2})^\Lambda} F
  \bydef \int_{(\R^{2|2})^\Lambda} F \prod_{i\in\Lambda} \frac{1}{z_i},
\end{equation}
and the $\HH^{2|2}$ model is defined by the following action (which is
now a form in $\Omega_\Lambda$)
\begin{equation} \label{e:H22-def}
  H\bydef H_{\beta,h} = \sum_{ \avg{ij}} \beta_{ij} (-u_i\cdot u_j-1) +
  h \sum_i (z_i-1) \in \Omega_\Lambda. 
\end{equation}
Lastly, we define the super-expectation of an observable
$F\in\Omega_\Lambda$ in the $\HH^{2|2}$ model by
\begin{equation} \label{e:H22-exp}
\avg{F}_{\HH^{2|2}} \bydef \int_{(\HH^{2|2})^{\Lambda}} F e^{-H}.
\end{equation}
Lemma~\ref{lem:SUSY-localisation} implies
that $\avg{1}_{\HH^{2|2}} = 1$, as promised in Section~\ref{sec:susy-hyp}.

\subsection{Supersymmetric horospherical coordinates}

The $\HH^{2|2}$ model can also be reparametrised in a supersymmetric
version of horospherical coordinates \cite[Sec.~2.2]{MR2728731}.
For the convenience of the reader, the explicit change of variables is
computed in Appendix~\ref{app:horo}.
In this parametrisation, $t$ and $s$ are two real variables and $\bar\psi$ and
$\psi$ are two Grassmann variables.  As in the previous
section, we denote the algebra of such forms by $\Omegahoro_\Lambda$.
The tilde refers to horospherical coordinates.  We write
\begin{equation} \label{e:xyzhorosusy}
  \begin{split}
  x = \sinh t -e^t (\frac12 s^2 +&\bar\psi\psi),
  \quad
  y = e^t s,
  \quad
  z = \cosh t + e^t(\frac12 s^2 +\bar\psi\psi), \\
  &\xi = e^t\bar\psi , \quad \eta = e^t \psi.
  \end{split}
\end{equation}
There is a generalisation of the change of variables formula from standard
integration to superintegration.
We only require the following special case given in \cite[Sec.~2.2]{MR2728731} and Appendix~\ref{app:horo}.
Forms $F \in \Omega_\Lambda$ are in correspondence with forms $\Fhoro \in \Omegahoro_\Lambda$
obtained by substituting the relations \eqref{e:xyzhorosusy} into \eqref{e:Fexpand}
using the definition of functions of forms.
Moreover, expanding
\begin{equation}
  \Fhoro
  =
  \sum_{I,J \subset \Lambda} \Fhoro_{I,J}(t,s)
  (\psi\bar\psi)_{I,J}
\end{equation}
the superintegral over $F$ can expressed as
\begin{equation}
  \label{e:HS-vol-SUSY}
  \int_{(\HH^{2|2})^\Lambda} F
  =
  \int_{(\R^2)^{\Lambda}} \Fhoro_{\Lambda,\Lambda}(t,s) \, \prod_i
  e^{-t_i}\, \frac{dt_i \, ds_i}{2\pi}.
\end{equation}

If a function $F(y)$ depends only on the $y$
coordinates then $F$ has degree $0$, and a computation
(see~\cite[Sec.~2.2]{MR2728731} and Appendix~\ref{app:horo}) shows that
\begin{align}
  \avg{F(y)}_{\HH^{2|2}} = \int_{(\HH^{2|2})^\Lambda} F(y) e^{-H}
  &= \int_{(\R^2)^\Lambda} F(e^{t}s) (e^{-H})_{\Lambda,\Lambda} \prod_{i} e^{-t_i} \frac{dt_i \, ds_i}{2\pi}
  \nonumber\\
  &= \int_{(\R^2)^\Lambda} F(e^{t}s) e^{-\Hhoro(t,s)} \prod_{i} dt_i \, ds_i,
\end{align}
with the function $\Hhoro$ given by \eqref{e:H22-Real}.

Analogously to \eqref{e:uxuy-notSUSY} a calculation gives the expressions
\begin{align}
  \label{e:uxuy-SUSY}
  -u_i \cdot u_j
  &= \cosh(t_i-t_j) +\frac12 (s_i-s_j)^2 e^{t_i+t_j} + (\bar\psi_i-\bar\psi_j)(\psi_i-\psi_j)e^{t_i+t_j}
  \\
  \label{e:zx-SUSY}
  z_i &= \cosh t_i + (\frac12 s_i^2 +\bar\psi_i\psi_i)e^{t_i}.
\end{align}
We again check that
\begin{equation} \label{e:sderiv-Susy}
  \ddp{z_i}{s_i} = y_i, \qquad
  \ddp{y_i}{s_i} = x_i+z_i, \qquad
  \ddp{(u_i\cdot u_j)}{s_i} = y_j (x_i+z_i) - y_i(x_j+z_j)
\end{equation}
and 
\begin{equation} \label{e:d2zu-susy}
  \begin{split}
  \ddp{^2}{s_j^2} z_j
  &= e^{t_j} = x_j+z_j,\\
  \ddp{^2}{s_i\partial s_l} (-1-u_j \cdot u_l)
  &=
  \begin{cases}
   -e^{t_j+t_l} = -(x_j+z_j)(x_l+z_l), & i=j ,\\ 
   + e^{t_j+t_{l}} = + (x_j+z_j)(x_{l}+z_{l}), & i=l, \\ 
   0, & \text{else}.
  \end{cases}
  \end{split}
\end{equation}

\subsection{Ward identities}
\label{sec:ward}

In this section we establish some useful Ward identities.
These Ward identities
are a reflection of the underlying symmetries of the target spaces
$\HH^{n}$ and $\HH^{2|2}$, see \cite[Appendix~B]{MR2728731}.  Note
that these identities are most easily seen in the ambient coordinates
$(x,y^1, \dots, y^{n-1},z)$.

\subsubsection{$\HH^n$}
\label{sec:hhn}

For the $\HH^{n}$ model we have the identities
\begin{equation}
  \label{e:nonSUSY-ward}
  \avg{x_{j}g(z)}_{\HH^{n}}=0.
\end{equation}
for any smooth function $g$.  This identity follows simply from the
invariance of the measure under $x \mapsto -x$
(see \eqref{e:HH-def}--\eqref{e:HH-exp}).
Moreover, by rotational symmetry,
we have $\avg{g(y^{r})}_{\HH^{n}}=\avg{g(x)}_{\HH^{n}}$ for
$r = 1,\dots, n-1$.

\subsubsection{$\HH^{2|2}$}
\label{sec:hh22}

For the $\HH^{2|2}$ model we have identities analogous to \eqref{e:nonSUSY-ward}:
\begin{equation}
  \label{e:SUSY-basicward}
  \avg{x_{j}g(z)}_{\HH^{2|2}}=0
\end{equation}
for any smooth function $g$. This identity again follows from the
symmetry $x \mapsto -x$ (see \eqref{e:H22-def}--\eqref{e:H22-exp}).
We also have $\avg{g(x)}_{\HH^{2|2}}=\avg{g(y)}_{\HH^{2|2}}$ by
rotational symmetry.  The following identities arise from
\eqref{e:SUSY-basicward}:
\begin{equation} 
  \label{e:SUSY-ward-pre}
  \begin{split}
  \avg{e^{t_j+t_l}}_{\HH^{2|2}} &= \avg{(x_j+z_j)(x_l+z_l)}_{\HH^{2|2}}
                                  =  \avg{x_jx_l+z_jz_l}_{\HH^{2|2}}
  \\ 
  \avg{e^{t_j}}_{\HH^{2|2}} &= \avg{x_j+z_j}_{\HH^{2|2}} 
  \end{split}
\end{equation}
and hence by supersymmetry and rotational invariance
\begin{equation} 
  \label{e:SUSY-ward}
  \begin{split}
  \avg{e^{t_j+t_l}}_{\HH^{2|2}} &= 1+\avg{y_jy_l}_{\HH^{2|2}}, 
  \\ 
  \avg{e^{t_j}}_{\HH^{2|2}} &= 1.
  \end{split}
\end{equation}
Indeed, the evaluations $\avg{z_iz_j}_{\HH^{2|2}} =
{\avg{z_{i}}_{\HH^{2|2}}}=1$ are by Lemma~\ref{lem:SUSY-localisation},
which implies more generally that for any smooth function $g$ with rapid decay,
\begin{equation} \label{e:localisation-z}
  \int_{(\HH^{2|2})^\Lambda} e^{-H_{\beta,0}} g(z) = g(1).
\end{equation}

\section{Proof of Theorem~\ref{thm:VRJP-Dynkin}}

In this section, for the $\HH^n$ model, we will let $y_{a}$ denote the
component $y^{1}_{a}$ of $u_{a}\in\HH^{n}$ and $s_{a}$ the
corresponding component $s^1_a$ in horospherical coordinates. By
symmetry (recall Section~\ref{sec:ward}), the results of this section
are valid if we replace $y^1_a$ by any of the first $n-1$ components
of $u_a$.

We will prove that for the $\HH^{n}$ model, $n\geq 2$,
\begin{equation} \label{e:Eg-Dynkin-VRJP}
  \begin{split}
  \hspace{-5mm}\sum_b
  \int_{(\HH^n)^\Lambda} e^{-H_{\beta,h}}& y_a y_b g(b,z-1) =\\
  &\int_{(\HH^{n})^\Lambda} e^{-H_{\beta,h}} z_a \int_0^\infty
  \E^{\beta}_{a,z-1}(g(X_t,L_t)) \, e^{-ht} \, dt .
  \end{split}
\end{equation}
In \eqref{e:Eg-Dynkin-VRJP}, and in the rest of this section, we omit the measure
$\mu^{\otimes\Lambda}(du)$ for integrals over $(\HH^n)^\Lambda$ from the notation.
For the $\HH^{2|2}$ model we prove that
\begin{equation} \label{e:Eg-VRJP}
  \sum_b \int_{(\HH^{2|2})^\Lambda}
  e^{-H_{\beta,h}} y_a y_b g(b,z-1) = \int_0^\infty
  \E^{\beta}_{a,0}(g(X_t,L_t)) \, e^{-ht} \, dt.
\end{equation} 
Theorem~\ref{thm:VRJP-Dynkin} in the case of $\HH^{2|2}$ is precisely
\eqref{e:Eg-VRJP}, and Theorem~\ref{thm:VRJP-Dynkin} in the case of
$\HH^{n}$ follows by normalising \eqref{e:Eg-Dynkin-VRJP}.  The
identities \eqref{e:Eg-Dynkin-VRJP} and \eqref{e:Eg-VRJP} are a result
of the following integration by parts formulas.  Recall that
$\mathcal{L}^\beta$ denotes the generator \eqref{e:Ldef} of the joint
position and local time process $(X_t,L_t)$ of the VRJP.

\begin{lemma} \label{thm:IBP}
  Let $\Lambda$ be finite, let $a\in\Lambda$, and let $g\colon \Lambda \times \R^\Lambda \to \R$ be a
  smooth function with rapid decay.  For the $\HH^{n}$ model, $n\geq 2$,
  \begin{equation} \label{e:Lg-Dynkin-VRJP}
    -  \sum_b \int_{(\HH^{n})^\Lambda} e^{-H_{\beta,0}} y_ay_b \mathcal{L}^{\beta}g(b,z-1)
    = \int_{(\HH^{n})^\Lambda} e^{-H_{\beta,0}} z_a g(a,z-1).
  \end{equation}
  For the $\HH^{2|2}$ model,
  \begin{equation} \label{e:Lg-VRJP}
    -  \sum_b \int_{(\HH^{2|2})^\Lambda} e^{-H_{\beta,0}} y_a y_b \mathcal{L}^{\beta}g(b,z-1)
    = g(a,0).
  \end{equation}
\end{lemma}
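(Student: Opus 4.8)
The plan is to prove both identities by integration by parts in the horospherical $s$-variables, exploiting the key derivative relations \eqref{e:sderiv}--\eqref{e:d2zu} (respectively \eqref{e:sderiv-Susy}--\eqref{e:d2zu-susy}). The organizing observation is that in horospherical coordinates the map $s_i \mapsto y_i = e^{t_i}s_i$ has $\partial y_i/\partial s_i = x_i + z_i = e^{t_i}$, and the Hamiltonian $H_{\beta,0}$ depends on the $s$'s only through the quadratic-in-$s$ terms whose $s_j$-derivative reproduces exactly the factors $\beta_{ij}e^{t_i+t_j}(s_i-s_j)$ appearing in the VRJP jump rates. Concretely, for the $\HH^n$ model I would start from $\int e^{-H_{\beta,0}} z_a\, g(a,z-1)$, write $z_a = e^{-t_a}\,\partial(e^{t_a}\cdot)/\partial s_a \cdot (\text{something})$ — more precisely use $\partial y_b/\partial s_b = x_b + z_b$ and the fact that $g$ depends on $s$ only via $z-1$ with $\partial z_j/\partial s_j = y_j$ — and integrate by parts in $s_a$ to move a derivative onto $e^{-H_{\beta,0}}$ and the remaining factors. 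The boundary terms vanish because $g$ has rapid decay and $e^{-H}$ decays in $s$.

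The heart of the computation is to show that, after integrating by parts once in each $s_b$, the sum $-\sum_b \int e^{-H_{\beta,0}} y_a y_b \mathcal{L}^\beta g(b,z-1)$ collapses. Using $\mathcal{L}^\beta g(b,\ell) = \sum_j \beta_{bj}(1+\ell_j)(g(j,\ell)-g(b,\ell)) + \partial g(b,\ell)/\partial\ell_b$ with $\ell = z-1$, one rewrites $(1+\ell_j) = z_j$ and recognizes, via \eqref{e:d2zu}, that $\beta_{bj} z_b z_j = \beta_{bj}(x_b+z_b)(x_j+z_j) = -\partial^2(-1-u_b\cdot u_j)/\partial s_b\partial s_j$ and similarly that the $\partial/\partial\ell_b$ term pairs with $\partial^2 z_b/\partial s_b^2 = e^{t_b} = x_b + z_b$. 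The strategy is then: (i) for each term in $\mathcal{L}^\beta g$, integrate by parts in the appropriate $s$-variable(s) so that the derivatives hitting $e^{-H_{\beta,0}}$ produce precisely the $\beta_{bj}$-weighted coefficients of $H_{\beta,0}$; (ii) use the Ward identity $\avg{x_j g(z)} = 0$ (equations \eqref{e:nonSUSY-ward}, \eqref{e:SUSY-basicward}) to discard the $x$-terms that appear when expanding $(x_b+z_b)(x_j+z_j)$; (iii) observe that the resulting expression telescopes in $b$ and $j$, leaving only the "diagonal" contribution localized at $b = a$, which is exactly $\int e^{-H_{\beta,0}} z_a g(a,z-1)$ (resp. $g(a,0)$ after localisation).

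For the $\HH^{2|2}$ model the computation is formally identical because the horospherical relations \eqref{e:sderiv-Susy}--\eqref{e:d2zu-susy} are word-for-word the same as in the $\HH^n$ case; the only new input is Lemma \ref{lem:SUSY-localisation}, which forces $\avg{z_a g(z)}_{\HH^{2|2}} = g(1)$ type evaluations and in particular turns the factor $\int e^{-H_{\beta,0}} z_a g(a,z-1)$ into $g(a,0)$, matching the right-hand side of \eqref{e:Lg-VRJP}. One must also check that the Grassmann (degree $\geq 1$) parts of the forms do not obstruct the integration by parts — they do not, since $\partial/\partial s_i$ commutes with the Grassmann structure and the superintegral \eqref{e:HS-vol-SUSY} is just a Lebesgue integral of the top component, so ordinary integration by parts applies coefficient-wise.

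I expect the main obstacle to be the bookkeeping in step (i)--(iii): keeping track of which $s$-variable each integration by parts acts on, ensuring the cross-terms from expanding the products $(x_b+z_b)(x_j+z_j)$ are correctly cancelled by the Ward identities rather than by brute force, and verifying that the telescoping/localization genuinely leaves only the $b=a$ term with no leftover pieces. A clean way to manage this is to first prove an intermediate identity expressing $-\sum_b \int e^{-H_{\beta,0}} y_a y_b\,\mathcal{L}^\beta g(b,z-1)$ as a single integral in which all $\mathcal{L}^\beta$-derivatives have been transferred onto $e^{-H_{\beta,0}}$, and only then invoke the Ward identities and (in the supersymmetric case) localisation to simplify it to the stated right-hand side.
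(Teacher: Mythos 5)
Your overall strategy --- integrating by parts in the horospherical $s$-variables, using $\ddp{z_i}{s_i}=y_i$ and $\ddp{y_i}{s_i}=x_i+z_i$, killing the $x$-terms with the Ward identity $\avg{x_jg(z)}=0$, and invoking localisation only at the very end in the $\HH^{2|2}$ case --- is the paper's. But the mechanism you propose for the central cancellation is misidentified, and pursued literally it would not close. You want to match the jump part of the generator against the second derivatives $\ddp{^2}{s_b\partial s_j}(-1-u_b\cdot u_j)=-(x_b+z_b)(x_j+z_j)$ via a double integration by parts, on the grounds that $1+\ell_j=z_j$ turns the rates into ``$\beta_{bj}z_bz_j$''. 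After multiplying by $y_ay_b$, however, the jump term carries $y_a\,\beta_{bj}\,y_bz_j\,(g(j,z-1)-g(b,z-1))$: a single factor of $z$ and an asymmetric factor $y_b$. This is not of the form $\beta_{bj}(x_b+z_b)(x_j+z_j)$ times anything obtainable by differentiating $H$ twice, and no Ward identity converts one into the other. The relations \eqref{e:d2zu} and \eqref{e:d2zu-susy} are the input to the Mermin--Wagner argument (Theorem~\ref{thm:MW}), not to this lemma, and there is no ``telescoping'' to perform.

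The actual proof is simpler and uses only the first-derivative relations \eqref{e:sderiv}: leave the jump term of $\mathcal{L}^\beta$ untouched and integrate by parts only the drift term. Writing $y_b\ddp{}{\ell_b}g(b,z-1)=\ddp{}{s_b}g(b,z-1)$ (chain rule together with $\ddp{z_b}{s_b}=y_b$), a single integration by parts in $s_b$ produces two contributions: one where the derivative falls on $e^{-H}$, which via $-\ddp{H}{s_b}=\sum_c\beta_{bc}\bigl(y_c(x_b+z_b)-y_b(x_c+z_c)\bigr)$ --- after discarding the $x$'s by the $x\mapsto-x$ symmetry and relabelling $b\leftrightarrow c$ --- reproduces exactly the jump term with the opposite sign, so the two cancel identically; and one where the derivative falls on $y_a$, giving $\delta_{ab}(x_a+z_a)$ and hence, by the Ward identity again, $\delta_{ab}\,z_a$. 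Only the $b=a$ contribution survives, yielding the right-hand side of \eqref{e:Lg-Dynkin-VRJP}; for $\HH^{2|2}$ the localisation identity \eqref{e:localisation-z} then evaluates $\int e^{-H}z_ag(a,z-1)=g(a,0)$, exactly as you anticipated. So the gap is in which term gets integrated by parts and what cancels what; the remaining ingredients of your outline (vanishing boundary terms by rapid decay, coefficient-wise integration by parts in the supersymmetric case) are sound.
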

\begin{proof}
The proofs are essentially the same for $\HH^{n}$ and
$\HH^{2|2}$, so we carry them out in parallel.

We write $\mathcal{L}$ for $\mathcal{L}^{\beta}$, $H$ for
$H_{\beta,0}$, and the integral $\int$ 
for $\int_{(\HH^n)^\Lambda}$ and, respectively, $\int_{(\HH^{2|2})^\Lambda}$.
By \eqref{e:sderiv} (resp.\ \eqref{e:sderiv-Susy}) we have
$y_b \ddp{}{\ell_b} g(b,z-1) = \ddp{}{s_b} g(b,z-1)$ where
$\ddp{}{\ell_b}$ denotes the derivative with respect to the $b$-th
component of the second argument.  Therefore
\begin{multline} \label{e:IBPstart}
  \sum_b \int e^{-H} y_a y_b \mathcal{L}g(b,z-1)
  \\
  = \int e^{-H} y_a \pbb{ \sum_{b,c} \beta_{bc} y_b z_c
    (g(c,z-1)-g(b,z-1)) + \sum_b \ddp{}{s_b} g(b,z-1) }  .
\end{multline}
Recall \eqref{e:HS-vol} (resp.\ \eqref{e:HS-vol-SUSY}) and
integrate the second term in the equation above by parts.
This produces two terms; by the rapid decay of $g$ there are no boundary terms.
For the first term produced by the integration by parts,
using \eqref{e:sderiv} (resp.\ \eqref{e:sderiv-Susy}) again,
\begin{align}
  \nonumber
  &\hspace{-15mm}\sum_b \int e^{-H} y_a \pa{-\ddp{H}{s_b}} g(b,z-1)\\
  \nonumber
  &=
  \sum_b \int e^{-H} y_a \pa{\sum_c \beta_{bc} \ddp{(u_b\cdot u_c)}{s_b}} g(b,z-1)
  \\
  &=
  \sum_{b,c} \int e^{-H} y_a \beta_{bc} y_bz_c (g(c,z-1)-g(b,z-1))
  .
\end{align}
This term cancels the first term on the right-hand side of \eqref{e:IBPstart}.
For the second term produced by the integration by parts, we use that
$\int x_a e^{-H} g(b,z) = 0$ by
\eqref{e:nonSUSY-ward} (resp.\ \eqref{e:SUSY-basicward}):
\begin{align} \label{e:Lg-VRJP-Dykin-pf-localisation} \nonumber
  \int e^{-H} \ddp{y_a}{s_b} g(b,z-1)
  &=
  \delta_{ab} \int e^{-H} (x_a+z_a) g(b,z-1)\\
  &=
  \delta_{ab} \int e^{-H} z_a g(a,z-1).
\end{align}
In the supersymmetric case, the localisation lemma
in the special case \eqref{e:localisation-z} further implies
that the last right-hand side can be evaluated as
\begin{equation} \label{e:Lg-VRJP-pf-localisation}
  \delta_{ab} \int e^{-H} z_a g(a,z-1)
  =
  \delta_{ab} g(a,0).
\end{equation}
Altogether, we have shown \eqref{e:Lg-Dynkin-VRJP} (resp.\ \eqref{e:Lg-VRJP}).
\end{proof}

\begin{proof}[Proof of Theorem~\ref{thm:VRJP-Dynkin}]
  It suffices to show \eqref{e:Eg-Dynkin-VRJP} and \eqref{e:Eg-VRJP}
  with $h=0$, by replacing $g(b,z-1)$ by $g(b,z-1)e^{-h(z-1)}$.
  Therefore from now on assume $h=0$.  To get \eqref{e:Eg-VRJP} from
  \eqref{e:Lg-VRJP}, we apply \eqref{e:Lg-VRJP} with $g(i,\ell)$
  replaced by $g_t(i,\ell) = \E_{i,\ell}(g(X_t,L_t))$.  By the
  definition of the generator we have
  $\mathcal{L} g_t(i,\ell) = \ddp{}{t} g_t(i,\ell)$, so
  \eqref{e:Lg-VRJP} gives
\begin{equation}
  \E_{a,0}(g(X_t,L_t)) =
  -\ddp{}{t} \pa{
  \sum_b \int e^{-H} y_a y_b g_t(b,z-1)}.
\end{equation}
Note that the process $(X_t,L_t)$ is transient even if the marginal $(X_t)$
is recurrent because $\sum_i L^i_t \to\infty$ as $t\to\infty$.
Therefore, integrating both sides over $t$ and using that $g_t(x,\ell) \to 0$ as
$t\to\infty$, which follows from the transience of $(X_{t},L_t)$ and the rapid decay
of $g=g_{0}$, we get
\begin{equation}
  \int_0^\infty
  \E_{a,0}(g(X_t,L_t)) \, dt
  =
  \sum_b \int e^{-H} y_a y_b g(b,z-1).
\end{equation}
The proof of \eqref{e:Eg-Dynkin-VRJP} from \eqref{e:Lg-Dynkin-VRJP} is
entirely analogous.
\end{proof}

\section{Proof of Theorem~\ref{thm:MW}}

The proof of the hyperbolic Mermin--Wagner follows that of the usual
Mermin--Wagner theorem closely
\cite{PhysRevLett.17.1133,10.1063/1.1705316}; see also the
presentation in \cite{MR610687}.  We begin with the non-supersymmetric
case.  Due to the non-compact target space, differences occur in the
bound of the term $\avg{|DH|^2}$ and in the role of the coordinate in
the direction of the magnetic field. As in the previous section we
write $H$ for $H_{\beta,h}$. We will write $\bar A$ to denote the
complex conjugate of $A$.

\begin{proof}[Proof of \eqref{e:MW-Hn}]
As in the previous section we write $y_{j}$ for $y_{j}^{1}$. We
also write $\avg{\cdot}$ for $\avg{\cdot}_{\HH^{n}}$,
and we use horospherical coordinates throughout the proof.
Throughout the proof $H$ will denote the energy of a spin
configuration in horospherical coordinates, recall \eqref{e:Hhoro-Hn}.

Let
\begin{equation}
  S(p) = \frac{1}{\sqrt{|\Lambda|}} \sum_j e^{i(p\cdot j)} y_j,
  \quad
  D = \frac{1}{\sqrt{|\Lambda|}} \sum_j e^{-i(p\cdot j)} \ddp{}{s_j}.
\end{equation}
By the Cauchy--Schwarz inequality,
\begin{equation} \label{e:CS}
  \avg{|S(p)|^2} \geq \frac{|\avg{S(p)DH}|^2}{\avg{|DH|^2}}.
\end{equation}

In the following, we compute the terms on the left- and right-hand
sides of the above inequality.
Note that we have the integration by
parts identity $\avg{F DH} = \avg{DF}$ for any smooth
$F\colon (\HH^{n})^{\Lambda}\to \R$ that does not grow too fast; the vanishing of
boundary terms can be seen by looking at the expression for $H$
(i.e., by \eqref{e:Hhoro-Hn}).

By the assumed translation invariance of $\beta$,
\begin{align}
  \label{e:S2}
  \avg{|S(p)|^2}
  &= \frac{1}{|\Lambda|} \sum_{j,l} e^{i p\cdot(j-l)} \avg{y_jy_l}
  = \frac{1}{|\Lambda|} \sum_{j,l} e^{i p\cdot(j-l)} \avg{y_0y_{j-l}}
  \\
  \nonumber
  &= \sum_{j} e^{i (p\cdot j)} \avg{y_0y_{j}},
 \\
  \label{e:SH}
  \avg{S(p)DH}
  &= \avg{DS(p)}
  = \frac{1}{|\Lambda|} \sum_{j,l} e^{ip\cdot(j-l)}\avg{\ddp{y_j}{s_l}}
  = \frac{1}{|\Lambda|} \sum_{j} \avg{x_j+z_j} \\
  \nonumber
  &= \avg{z_0},
\\
  \label{e:H2}
  \avg{|DH|^2}
  &
  = \avg{D\bar D H}
    = \frac{1}{|\Lambda|} \sum_{j,l} e^{ip \cdot (j-l)}\avga{\ddp{^2H}{s_j \partial s_l}}
    .
\end{align}
In \eqref{e:SH} we have used $\avg{x_{j}}=0$; recall
Section~\ref{sec:ward}. By $\avg{x_jz_k} = 0$, Cauchy--Schwarz,
translation invariance,
that $\avg{x_0^2} = \avg{y_0^2}$ (recall the symmetries from Section~\ref{sec:hhn}),
and the constraint $u_0 \cdot u_0 = -1$, observe that
\begin{equation}
  \avg{(x_j+z_j)(x_l+z_l)}
  =
  \avg{x_jx_l+z_jz_l}
  \leq
  \avg{x_0^2}+\avg{z_0^2}
  =
  1+(n+1)\avg{y_0^2}
  .
\end{equation}
Thus, using \eqref{e:d2zu} and $\avg{x_j} = 0$ once more,
\eqref{e:H2} can be rewritten and bounded above by
\begin{align}
  \nonumber
  \avg{|DH|^2}
  &=
    \frac{1}{|\Lambda|} \sum_{j,l} \beta_{jl} \avg{(x_j+z_j)(x_l+z_l)}
    (1-e^{ip\cdot (j-l)}) + \frac{h}{|\Lambda|} \sum_j \avg{x_j+z_j} 
    \\
  &\leq
    \frac{1}{|\Lambda|} \sum_{j,l} \beta_{jl} (1+(n+1)\avg{y_0^2}) (1-\cos(p\cdot (j-l))) + h\avg{z_0}.
\end{align}
In summary, we have shown (recall \eqref{e:lambda})
\begin{align}
  \avg{|DH|^2}
  \leq
  (1+(n+1)\avg{y_0^2}) \lambda(p) + h\avg{z_0}.
\end{align}
Using \eqref{e:S2} and substituting the above bounds into \eqref{e:CS} gives
\begin{align}
  \nonumber
  \sum_{j} e^{i (p\cdot j)} \avg{y_0y_{j}}
  \geq \frac{|\avg{S(p)DH}|^2}{\avg{|DH|^2}}
  &\geq
    \frac{\avg{z_0}^2}{(1+(n+1)\avg{y_0^2})\lambda(p) + h\avg{z_0}}
    \\
  &\geq
    \frac{1}{ (1+(n+1)\avg{y_0^2}) \lambda(p) + h}.
\end{align}
The last inequality follows from $h\geq 0$ and $1\leq \avg{z_0}$,
which holds by the definition of $\HH^n$.
\end{proof}

\begin{proof}[Proof of \eqref{e:MW-H22}]
  We use that the expectation of a function $F(y)$ can be written
  using horospherical coordinates in terms of the \emph{probability
    measure} \eqref{e:H22-Real}.  Throughout this proof, we denote the
  expectation with respect to this probability measure by
  $\avg{\cdot}$.  By the Cauchy--Schwarz inequality, and since $S(p)$
  is a function of the $y$,
  \begin{equation} 
    \label{e:CS-SUSY} 
    \avg{|S(p)|^2}_{\HH^{2|2}} =
    \avg{|S(p)|^2} \geq
    \frac{|\avg{S(p)D\Hhoro}|^2}{\avg{|D\Hhoro|^2}}.
  \end{equation}
  The probability measure $\avg{\cdot}$ obeys the integration by parts
  $\avg{FD\Hhoro} = \avg{DF}$ identity for any function $F=F(s,t)$
  that does not grow too fast. Therefore by translation invariance we
  find that, as in the case of $\HH^{n}$, 
  \begin{align}
    \label{e:S2-susy}
    \avg{|S(p)|^2}
    &= \frac{1}{|\Lambda|} \sum_{j,l} e^{i p\cdot(j-l)} \avg{y_jy_l}
      = \frac{1}{|\Lambda|} \sum_{j,l} e^{i p\cdot(j-l)}
      \avg{y_0y_{j-l}} \\
    \nonumber
      &= \sum_{j} e^{i (p\cdot j)} \avg{y_0y_{j}},
    \\
    \label{e:SH-susy}
    \avg{S(p)D\Hhoro}
    &= \avg{DS(p)}
      = \frac{1}{|\Lambda|} \sum_{j,l} e^{ip\cdot(j-l)}\avg{\ddp{y_j}{s_l}}
      = \frac{1}{|\Lambda|} \sum_{j} \avg{e^{t_j}}
      = 1,
  \end{align}
  where the last identity uses \eqref{e:SUSY-ward}.  By
  \eqref{e:SUSY-ward}, Cauchy--Schwarz, and translation invariance we have
  \begin{equation}
    \label{e:af}
    \avg{e^{t_j+t_l}}
    = 1+\avg{y_jy_l}
    \leq 1+\avg{y_0^2} .
  \end{equation}
  Using \eqref{e:af} and the integration by parts identity it
  follows that 
  \begin{align}
    \avg{|D\Hhoro|^2}
    =
    \avg{D\bar D\Hhoro}
    &=
      \frac{1}{|\Lambda|} \sum_{j,l} \beta_{jl} \avg{e^{t_j+t_l}} (1-\cos(p\cdot (j-l))) + \frac{h}{|\Lambda|} \sum_j \avg{e^{t_j}}
          \nonumber\\
    &\leq
      \frac{1}{|\Lambda|} \sum_{j,l} \beta_{jl} (1+\avg{y_0^2}) (1-\cos(p\cdot (j-l))) + h
      \nonumber\\
    &=
      (1+\avg{y_0^2})  \lambda(p) + h.
  \end{align}
  In summary, we have proved
  \begin{equation}
    \sum_{j} e^{i (p\cdot j)} \avg{y_0y_{j}}
    =
    \avg{|S(p)|^2}
    \geq \frac{|\avg{S(p)D\Hhoro}|^2}{\avg{|D\Hhoro|^2}}
    \geq
    \frac{1}{(1+\avg{y_0^2}) \lambda(p) + h}
  \end{equation}
  as claimed.
\end{proof}

\appendix
\section{Horospherical coordinates}
\label{app:horo}

\subsection{$\HH^{n}$}
Under the change of variables 
\begin{equation}
x = \sinh t -\frac12 |\tilde s|^{2} e^t,
\quad
y^i = e^{t} s^i, \quad
z = \cosh t + \frac12 |\tilde s|^2 e^t,
\end{equation}
the measure transforms as
\begin{equation}
\frac{1}{z} \, dx \wedge dy^{1} \wedge \dots \wedge dy^{n-1}\mapsto \frac{\det J}{\cosh t + \frac12 |\tilde s|^2 e^t}dt\wedge ds^{1} \wedge \dots \wedge ds^{n-1},
\end{equation}
where the Jacobian matrix in block form is
\begin{equation}
J = \begin{bmatrix}
A_{1\times 1}&B_{1\times n-1}\\
C_{n-1\times 1}&D_{n-1\times n-1}
\end{bmatrix}
\end{equation}
with
\begin{alignat}{2}
A &= \ddp{x}{t} = \cosh{t} - \frac{1}{2}|\tilde{s}|^2e^t, &\qquad B_j &= \ddp{x}{s^j} = -s_je^{t},\\
C_i &= \ddp{y^i}{t} = s^{i}e^t, &\qquad D_{ij} &= \ddp{y^i}{s^j} = \delta_{ij}e^{t}.
\end{alignat}
Noting that $D = e^{t}I$, the determinant is easily computed using the
Schur complement formula,
\begin{align}
\det J &= (\det D)\det{(A-BD^{-1}C)}\nonumber\\
&= e^{(n-1)t}\left(\cosh{t} - \frac{1}{2}|\tilde{s}^2|e^{t} -
  \sum_{i=1}^{n-1}(-s_ie^{t})e^{-t}(s_ie^{t})\right) \nonumber \\
&= e^{(n-1)t}(\cosh{t} + \frac{1}{2}|\tilde{s}|^2e^{t}),
\end{align}
giving the transformed measure as
\begin{equation}
\frac{\det J}{\cosh t + \frac12 |\tilde s|^2 e^t} \, dt \wedge ds^{1} \wedge \dots \wedge ds^{n-1} = e^{(n-1)t} \, dt\wedge ds^{1} \wedge \dots \wedge ds^{n-1}.
\end{equation}

\subsection{$\HH^{2|2}$}
The calculation for $\HH^{2|2}$ is similar to the previous case, but
the Jacobian is replaced by the Berezinian.
The notation in \eqref{e:intFdef} 
corresponds to the following notation in 
\cite{MR2728731} resp.\ \cite{MR914369}:
\begin{equation}
  \int_{\R^{2|2}} F
  = \int dx \wedge dy \circ \partial_\xi \, \partial_\eta \, F
  = \int F \, d_\eta \, d_\xi \, dx \, dy .
\end{equation}
Applying \cite[Theorem~2.1]{MR914369} to the change of variables
\begin{equation}
  \begin{split}
  x = \sinh t -\frac12(s^{2}+&2\psibar\psi) e^t,\quad 
  y = se^{t}, \quad 
  z = \cosh t + \frac12 (s^2 + 2\psibar\psi)e^t, \\
  &
  \eta = \psi e^{t},\quad 
  \xi = \psibar e^{t},\quad
 \end{split}
\end{equation}
the Berezin measure transforms as
\begin{equation} 
  \frac{1}{z} \, d_\eta \, d_\xi \, dx \, dy \mapsto 
  \frac{\sdet M}{\cosh t + \frac12 (s^2 + 2\psibar\psi)e^t} \, d_\psi \, d_{\psibar} \, dt \, ds,
\end{equation}
where $M$ is the Berezinian supermatrix
\begin{equation}
M = \begin{bmatrix}
A&B\\
C&D
\end{bmatrix}
= 
\begin{bmatrix}
\ddp{x}{t}&\ddp{y}{t}&\ddp{\eta}{t}&\ddp{\xi}{t}\\
\ddp{x}{s}&\ddp{y}{s}&\ddp{\eta}{s}&\ddp{\xi}{s}\\
\ddp{x}{\psi}&\ddp{y}{\psi}&\ddp{\eta}{\psi}&\ddp{\xi}{\psi}\\
\ddp{x}{\psibar}&\ddp{y}{\psibar}&\ddp{\eta}{\psibar}&\ddp{\xi}{\psibar}\\
\end{bmatrix},
\end{equation}
and $\sdet M=(\det{D})^{-1}\det{(A -BD^{-1}C)}$ is its Berezinian (superdeterminant).	
The four blocks are then
\begin{alignat}{2}
A &= \begin{bmatrix}
\cosh{t}-\frac{1}{2}(s^2+2\psibar\psi)e^{t}&se^{t}\\
-se^{t}&e^{t}\\
\end{bmatrix},
&\qquad B &= \begin{bmatrix}
\psi e^{t}&\psibar e^{t}\\
0&0\\
\end{bmatrix},\\
C&= \begin{bmatrix}
\psibar e^{t}&0\\
-\psi e^{t}&0
\end{bmatrix},
&\qquad
D &= \begin{bmatrix}
e^{t}&0\\
0&e^{t}
\end{bmatrix}.
\end{alignat}
The first term in the Berezinian is simply
$(\det{D})^{-1} = e^{-2t}$,
whilst the second is
\begin{align}
  \nonumber
  \det{(A -BD^{-1}C)} 
&= \det{\left(\begin{bmatrix}
	\cosh{t}-\frac{1}{2}(s^2+2\psibar\psi)e^{t}&se^{t}\\
	-se^{t}&e^{t}\\
	\end{bmatrix}
	+
	\begin{bmatrix}
	2\psibar\psi e^{t}&0\\
	0&0
	\end{bmatrix} 
	\right)}\\
&= e^{t}\left(\cosh{t} + \frac{1}{2}(s^2 + 2 \psibar\psi)e^{t}\right),
\end{align}
giving the transformed Berezin measure as
\begin{equation}
\frac{\sdet M}{\cosh t + \frac12 (s^2 + 2\psibar\psi)e^t} \, d_\psi \, d_{\psibar} \, dt \, ds = e^{-t} \, d_\psi \, d_{\psibar} \,  dt \, ds,
\end{equation}
which corresponds to \eqref{e:HS-vol-SUSY}.

\section*{Acknowledgments}

We would like to thank D.\ Brydges, M.\ Disertori, M.\ Niedermaier,
T.\ Spencer, and B.\ T\'{o}th for helpful discussions on the topics of
this paper. We would also like to thank the referee for their
constructive comments and suggestions.  T.H.\ is supported by EPSRC
grant no.\ EP/P003656/1.  A.S.\ is supported by EPSRC grant no.\
1648831.


\end{document}